\renewcommand{\L}{\mathcal{L}}
\newcommand{\IR}{\mathbb{R}}
\newcommand{\IN}{\mathbb{N}}
\newcommand{\IC}{\mathbb{C}}
\newcommand{\norm}[1]{\left\lVert#1\right\rVert}
\numberwithin{equation}{section}
\newtheorem{thm}{Theorem}[section]
\newtheorem*{thm*}{Theorem}
\newtheorem{prp}[thm]{Proposition}
\newtheorem{crl}[thm]{Corollary}
\newtheorem{cnj}[thm]{Conjecture}
\newtheorem{lmm}[thm]{Lemma}
\newtheorem{dff}[thm]{Definition}
\theoremstyle{definition}
\newtheorem{exm}[thm]{Example}
\newtheorem{rmk}[thm]{Remark}
\DeclareMathOperator{\Tr}{Tr}
\DeclareMathOperator{\Id}{Id}
\let\Re\relax
\DeclareMathOperator{\Re}{Re}
\DeclareMathAlphabet{\mymathbb}{U}{bbold}{m}{n}
\newcommand{\stkout}[1]{\ifmmode\text{\sout{\ensuremath{#1}}}\else\sout{#1}\fi}
\newif\ifverbose
\newcommand{\ins}[1]{\ifverbose\textcolor{blue}{#1}\else#1\fi}
\title{Necessary Criteria for Markovian Divisibility of Linear Maps}
\author{Matthias C. Caro$^{1,2}$\footnote{ORCID: \href{https://orcid.org/0000-0001-9009-2372}{0000-0001-9009-2372}}~ and Benedikt R. Graswald$^1$\footnote{ORCID: \href{https://orcid.org/0000-0002-2777-2401}{0000-0002-2777-2401}} \\[1mm] 
\small $^1$ Technical University of Munich, Germany, Department of Mathematics \\
\small $^2$ Munich Center for Quantum Science and Technology (MCQST), Munich, Germany\\[-1mm]
\small {\tt \href{mailto:caro@ma.tum.de}{caro@ma.tum.de}, \href{mailto:graswabe@ma.tum.de}{graswabe@ma.tum.de} }}
\begin{document}

\maketitle

\begin{abstract}
Characterizing those quantum channels that correspond to Markovian time evolutions is an open problem in quantum information theory, even different notions of quantum Markovianity exist. One notion related to this problem is that of infinitesimal Markovian divisibility for quantum channels introduced in \cite{Wolf.2008}. Whereas there is a complete characterization for infinitesimal Markovian divisible qubit channels, no necessary or sufficient criteria are known for higher dimensions, except for necessity of non-negativity of the determinant.\\
We describe how to extend the notion of infinitesimal Markovian divsibility to general linear maps and compact and convex sets of generators. We give a general approach towards proving necessary criteria for (infinitesimal) Markovian divisibility that involve singular values of the linear map. With this approach, we prove two necessary criteria for infinitesimal divisibility of quantum channels that work in any finite dimension $d$: an upper bound on the determinant in terms of a $\Theta(d)$-power of the smallest singular value, and in terms of a product of $\Theta(d)$ smallest singular values. Our criteria allow us to analytically construct, in any given dimension, a set of channels that contains provably non infinitesimal Markovian divisible ones.\\
We also discuss the classical counterpart of this scenario, i.e., stochastic matrices with the generators given by transition rate matrices. Here, we show that no necessary criteria for infinitesimal Markovian divisibility of the form proved for quantum channels can hold in general. However, we describe subsets of all transition rate matrices for which our reasoning can be applied to obtain necessary conditions for Markovian divisibility.
\end{abstract}

\section{Introduction}

\cite{Gorini.1976} and \cite{Lindblad.1976} made an important step towards understanding the connection between master equations and the framework of quantum channels for describing quantum evolutions by characterizing the generators which give rise to semigroups of quantum channels via the corresponding (time-independent) master equation. The converse question, i.e., the problem of characterizing those quantum channels that can arise from the solution of a (possibly time-dependent) Lindblad master equation is, however, still awaiting an answer.\\

Endeavours towards a resolution of this problem have given rise to different notions of (Non-) Markovianity for quantum evolutions. One line of research is based on connecting Markovianity to certain divisibility properties of quantum evolutions, in particular to the possibility of dividing the evolution into infinitesimal pieces. While this gives an intuitively plausible notion of time-dependent quantum Markovianity and some structural properties can be established on its basis, it has so far not given rise to easily verifiable criteria for Markovianity (with a simple exception). Only for evolutions of qubit systems is this notion completely understood. We go beyond this characterization for the $2$-dimensional case and establish necessary criteria for a quantum channel - or a linear map in general - to be divisible into infinitesimal Markovian pieces. Our criteria take the form of an upper bound on the determinant in terms of a power of a product of smallest singular values.\\
Our proof strategy is not specific to quantum channels, but can be applied to obtain necessary criteria for (infinitesimal) Markovian divisibility of general linear maps w.r.t.~a closed and convex set of generators, if the generators satisfy certain spectral properties.

\subsection{Overview of our Results}
In this work, we study the following question: Given a linear map $T$ and a set of linear maps $\mathcal{G}$, acting on $\IC^d$, can $T$ be approximated arbitrarily well by linear maps of the form $\prod_i e^{G_i}$, where $G_i\in\mathcal{G}$? If that is the case, we say that $T$ is \emph{Markovian divisible w.r.t.~the set of generators} $\mathcal{G}$.\\

We aim towards establishing necessary criteria for Markovian divisibility of the form $$\lvert \det(T)\rvert \leq\left(\prod\limits_{i=1}^k s_i^\uparrow(T)\right)^p,$$ where $k=k(d)$ and $p=p(d)$ depend on the underlying dimension. Proving such criteria becomes tractable by combining multiplicativity of the determinant and sub-/super-multiplicativity of products of largest/smallest singular values with Trotterization.\\
In Section \ref{SctCriteriaGeneral}, we describe how to use these properties to reduce the problem of establishing necessary criteria of the above form to a spectral property of the generators. We can summarize our reduction as follows:
\begin{thm*}(Theorem \ref{ThmNecessaryCondMarkovianDivisibleGeneral} - Informal Version)\\
Let $\mathcal{G}\subseteq\mathcal{M}_d$ be a set of generators. Let $T$ be Markovian divisible w.r.t.~$\mathcal{G}$ and suppose that every $G\in\mathcal{G}$ satisfies $\Tr[G+G^\ast] - p \sum\limits_{i=1}^k \lambda_i^\uparrow (G+G^\ast)\leq 0$. Then $\lvert\det (T)\rvert\leq \left(\prod\limits_{i=1}^k s_i^\uparrow (T) \right)^p$.
\end{thm*}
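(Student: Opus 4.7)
The plan is to reduce the statement, using Markovian divisibility together with the multiplicativity of $\lvert\det(\cdot)\rvert$ and super-multiplicativity of products of smallest singular values, to a single-generator inequality, and then to derive that single-generator inequality from the assumed spectral condition via a log-majorization bound for $e^G$.

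First, since $T$ is Markovian divisible w.r.t.~$\mathcal{G}$, there is a sequence $T_m=\prod_{j=1}^{N_m}e^{G_{j,m}}$ with $G_{j,m}\in\mathcal{G}$ converging to $T$ in $\mathcal{M}_d$. Both $\lvert\det(\cdot)\rvert$ and $\prod_{i=1}^k s_i^\uparrow(\cdot)$ are continuous, so it is enough to prove $\lvert\det T_m\rvert\leq\bigl(\prod_{i=1}^k s_i^\uparrow(T_m)\bigr)^p$ for each $m$ and then pass to the limit. Multiplicativity of the determinant gives $\lvert\det T_m\rvert=\prod_j\lvert\det e^{G_{j,m}}\rvert$. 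Super-multiplicativity of the product of the $k$ smallest singular values---itself an immediate consequence of Horn's log-submultiplicativity of products of largest singular values combined with multiplicativity of $\lvert\det\rvert$ and the relation $s_i^\uparrow(M)=1/s_i^\downarrow(M^{-1})$ for invertible $M$---yields $\prod_{i=1}^k s_i^\uparrow(T_m)\geq\prod_j\prod_{i=1}^k s_i^\uparrow(e^{G_{j,m}})$. The whole claim therefore reduces to the per-generator inequality
$$\lvert\det e^G\rvert\leq\left(\prod_{i=1}^k s_i^\uparrow(e^G)\right)^p\qquad\text{for every }G\in\mathcal{G}.$$

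The main technical step, and the principal obstacle, is the log-majorization
$$\prod_{i=1}^k s_i^\uparrow(e^G)\geq\exp\left(\tfrac{1}{2}\sum_{i=1}^k\lambda_i^\uparrow(G+G^\ast)\right).$$
I would prove this by combining two classical ingredients. First, the logarithmic-norm estimate yields $\norm{e^B}\leq\exp\bigl(\lambda_1^\downarrow(\tfrac{1}{2}(B+B^\ast))\bigr)$ for any matrix $B$. Second, passing to the $k$-th antisymmetric tensor representation gives $\prod_{i=1}^k s_i^\downarrow(e^G)=\norm{\Lambda^k(e^G)}=\norm{e^{d\Lambda^k(G)}}$; since $d\Lambda^k$ is linear and compatible with $(\cdot)^\ast$, and since the top eigenvalue of $\Re(d\Lambda^k(G))$ equals $\sum_{i=1}^k\lambda_i^\downarrow(\Re G)$, one obtains $\prod_{i=1}^k s_i^\downarrow(e^G)\leq\exp\bigl(\sum_{i=1}^k\lambda_i^\downarrow(\Re G)\bigr)$. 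Applying this bound with $-G$ in place of $G$ and passing to reciprocals converts it into the desired lower bound on the product of smallest singular values.

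Finally, using $\lvert\det e^G\rvert=e^{\Re\Tr G}=\exp(\tfrac{1}{2}\Tr[G+G^\ast])$ together with the hypothesis rewritten as $\Tr[G+G^\ast]\leq p\sum_{i=1}^k\lambda_i^\uparrow(G+G^\ast)$ and the log-majorization just established gives
$$\lvert\det e^G\rvert\leq\exp\left(\tfrac{p}{2}\sum_{i=1}^k\lambda_i^\uparrow(G+G^\ast)\right)\leq\left(\prod_{i=1}^k s_i^\uparrow(e^G)\right)^p,$$
which is the per-generator inequality. Multiplying over the factors $e^{G_{j,m}}$, invoking super-multiplicativity of the $k$-smallest-singular-value product, and letting $m\to\infty$ by continuity closes the argument. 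Everything apart from the log-majorization step is a routine combination of multiplicativity, super-multiplicativity and continuity.
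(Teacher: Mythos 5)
Your proposal is correct and follows essentially the same route as the paper: reduce by continuity to finite products, use multiplicativity of the determinant together with Horn's inequality to pass to single factors $e^G$ (the paper's Lemma \ref{LmmReductionSingleFactors}), and then establish the per-generator bound via the log-majorization $\prod_{i=1}^{j} s_i^\downarrow(e^G)\leq\prod_{i=1}^{j} s_i^\downarrow(e^{\Re G})$ (the paper's Lemma \ref{LmmSufficientEigAssumptionGeneral}, which cites this from Bhatia rather than re-deriving it). Your antisymmetric-tensor-power argument is just the standard proof of that cited inequality, and applying it to $-G$ and inverting is equivalent to the paper's trick of dividing $\lvert\det(e^G)\rvert$ by the product of the $d-k$ largest singular values, so the two proofs coincide in substance.
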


We employ our proof strategy for the physically motivated scenario of \emph{infinitesimal Markovian divisibility}. Here, the objects of interest are linear maps $T$ that, for any $\varepsilon>0$, can be arbitrarily well approximated by linear maps of the form $\prod_i e^{G_i}$, where $G_i\in\mathcal{G}$ are s.t.~$\norm{e^{G_i}-\mathds{1}_d}\leq \varepsilon$.\\

We first study the case in which $\mathcal{G}$ is the set of Lindblad generators, seen as linear maps on $d\times d$-matrices. I.e., we consider those generators that give rise to semigroups of quantum channels. With this choice, the notion of infinitesimal Markovian divisibility of a linear map $T$ on $d\times d$-matrices becomes that of infinitesimal Markovian divisibility of quantum channels introduced in \cite{Wolf.2008}.\\
We prove necessary criteria for infinitesimal Markovian divisibility of quantum channels in any finite dimension. Namely, for an infinitesimal Markovian divisible quantum channel $T$ on $d\times d$-matrices we show in Corollaries \ref{CrlQuDetVSPowerSmallestSV} and \ref{CrlQuDetVSProductSmallestSV} that
\begin{align*}
    &|\det(T)|
    \leq
    \left(s_1^\uparrow (T)\right)^\frac{d}{2},
    \textrm{ and }~ 
    |\det (T)|
    \leq
    \prod\limits_{i=1}^{\lfloor 2d-2\sqrt{2d}+1\rfloor} s_i^{\uparrow}(T).
\end{align*}
Moreover, we give explicit examples (Examples \ref{ExmOptimalityLinearOrder} and \ref{ExmOptimalityLinearOrder2}) of infinitesimal divisible channels from which we can conclude that the $d$-dependence of the exponent (in the first bound) and of the number of singular value factors (in the second bound) is close to optimal, respectively.\\
We also describe  how to interpolate between these bounds in Corollary \ref{PropInterpolation} and obtain that for an infinitesimal divisible quantum channel $T$ acting on $d\times d$-matrices,
\begin{align*}
    |\det (T)|\leq \left( \prod\limits_{i=1}^{k} s_i^{\uparrow}(T)\right)^{\frac{2d}{k+2\sqrt{k}+1}},\quad\textrm{for }1\leq k\leq d^2.
\end{align*}
These criteria allow us to give new examples of provably non infinitesimal divisible channels in dimensions strictly bigger than $2$, which were not recognizable as such previously (Example \ref{ExmNewInfinitesimalDivisibleChannels}).\\

As a second application of our proof strategy, we take $\mathcal{G}$ to be the set of transition rate matrices of dimension $d$, and thereby study the question of (infinitesimal) Markovian divisibility of stochastic matrices. We first show via an explicit example (Example \ref{ExmCounterexampleStochastic}) that no necessary criterion of the above form can hold in this scenario when we allow all transition rate matrices as generators. Combined with our results for infinitesimal Markovian divisible quantum channels, this implies that stochastic matrices cannot be embedded into quantum channels while preserving both the singular values and the property of infinitesimal Markovian divisibility at the same time.\\
If, however, we restrict our set of generators to transition rate matrices whose diagonal elements differ by at most a constant factor, our proof strategy can be applied and yields an upper bound on the determinant in terms of a power of the smallest singular value (Corollary \ref{CrlStochDetVSPowerSmallestSV}).

\subsection{Related Work}
The quantum Markovianity problem, the question of deciding whether a given quantum channel is a member of a quantum dynamical semigroup, was considered from a complexity-theoretic perspective in \cite{Cubitt.2012}. Therein, it was shown to be NP-hard and that the same is true of the classical counterpart of this problem, with stochastic matrices instead of quantum channels and transition rate matrices instead of Lindblad generators. The computational complexity of a related divisibility problem for stochastic matrices, namely that of finite divisibility, was studied in \cite{Bausch.2016}. Also this divisibility problem turns out to be NP-hard, even NP-complete.\\
When fixing the system dimension, however, deciding whether a quantum channel is an exponential of a Lindblad generator, in which case it can be called time-independent Markovian because it solves a time-independent Lindblad master equation, becomes feasible. Corresponding necessary and sufficient criteria and an efficient (in the desired precision) algorithmic procedure for this case with a fixed dimension were given in \cite{Wolf.2008b, Cubitt.2012}. These results pertain to time-independent (quantum) Markovianity and cannot directly be applied to the time-dependent case.\\
Our focus is on infinitesimal Markovian divisibile of quantum channels. These were introduced and studied in detail for qubit channels by \cite{Wolf.2008}. Therein, it is also observed that every infinitely divisible quantum channel, i.e., every channel that can be written as an $n^{\textrm{th}}$ power of a quantum channel for every $n\in\mathbb{N}$, is infinitesimal divisible. The notion of infinitesimal Markovian divisibility can be seen as corresponding to time-dependent Markovianity, i.e., to solutions of time-dependent Lindblad master equations. Thereby, it offers a route of studying a time-dependent version of the Markovianity problem.\\

The plethora of different notions of Markovianity for quantum evolutions and relations between them are discussed in several review papers, among them \cite{Rivas.2014, Breuer.2016, Li.2018, Li.2019}. On the one hand, one considers notions of quantum Markovianity based on divisibility of the evolution, either for quantum channels or for quantum dynamical maps with corresponding propagators. This line of research was initiated by \cite{Wolf.2008}, references \cite{Davalos.2019, Chruscinski.2021} constitute recent additions to it. Related to this approach, reference \cite{Rivas.2010} proposed a measure of non-Markovianity on the basis of infinitesimal deviations from complete positivity. On the other hand, there are notions and measures of non-Markovianity based on (quantum) information backflow, often formalized in terms of distinguishability measures that are known to be non-increasing under completely positive and trace preserving maps. This idea was introduced in \cite{Breuer.2009}, reference \cite{Li.2018} recently proposed a variant of it. In Figure \ref{fig:NotionsMarkovianity}, we present only a selected few of these notions and of the connections between them.
\begin{figure}
    \centering
    \includegraphics[width = \textwidth]{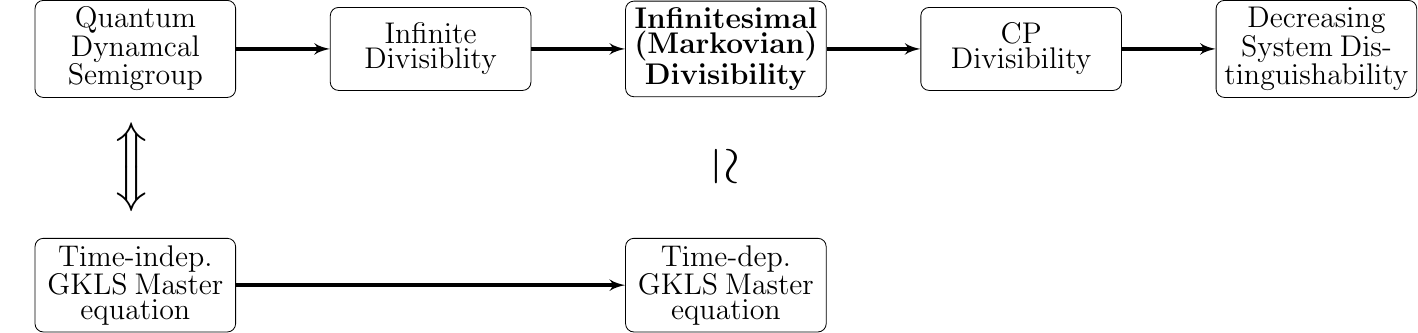}
    \caption{A depiction of the relations between different notions of divisibility and Markovianity of quantum channels and quantum dynamical maps. A simple arrow indicates that a channel or dynamical map satisfying the condition at the tail also satisfies that at the head. $\Updownarrow$ indicates the equivalence of two notions. And $\simeq$ is used to indicate a correspondence that, to the best of our knowledge, has been rigorously proven only for the qubit case.}
    \label{fig:NotionsMarkovianity}
\end{figure}


\subsection{Structure of the Paper}
Section \ref{SctPreliminaries} introduces basic notions from quantum information that provide our overall framework. In Section \ref{SctMarkovianDivisibility} we introduce the core definition of infinitesimal Markovian divisibility in a general setting and discuss prior work in the quantum scenario. Section \ref{SctCriteriaMarkovianDivisibility} contains our main results: We describe the general proof approach in Subsection \ref{SctCriteriaGeneral} and apply it to derive necessary criteria for infinitesimal Markovian divisibility of quantum channels in Subsection \ref{SctCriteriaQuantum}. The same type of criteria do not in general hold for infinitesimal divisibility of stochastic matrices, only for suitable subsets, as we argue in Subsection \ref{SctCriteriaStochastic}. We conclude with some open questions and the references. 

\section{Preliminaries}\label{SctPreliminaries}
We introduce some of the basic notions of quantum information, with a focus on quantum channels and the corresponding semigroups. The interested reader is referred to \cite{Nielsen.2010} for more details.\\

Throughout the paper, we denote the set of $d\times d$ complex matrices as $\mathcal{M}_d$, for a dimension $d\in\IN$. The identity matrix in $\mathcal{M}_d$ is written as $\mathds{1}_d$, whereas $\textrm{id}=\textrm{id}_{\mathcal{M}_d}$ denotes the identity map on $\mathcal{M}_d$. For $A\in\mathcal{M}_d$ we use $\lambda_i = \lambda_i(A)$ to denote its eigenvalues. If $A\in\mathcal{M}_d$ is Hermitian, we use $\lambda_i^\downarrow$ ($\lambda_i^\uparrow$) to denote the eigenvalues in decreasing (increasing) order. Similarly, we use the notation $s_i^\downarrow$ and $s_i^\uparrow$ for singular values. Finally, $\Tr[A]$ will denote the trace of $A$.

\subsection{Quantum States and Channels}

A $d$-level quantum system (for $d\in\IN$) is described by a $d\times d$ \emph{density matrix}, i.e., an element of $$\mathcal{S}\left( \IC^d\right):= \{ \rho\in\mathcal{M}_d ~|~ \rho\geq 0,\ \Tr[\rho]=1\},$$ where $\rho\geq 0$ means that the matrix $\rho$ is positive semidefinite.\\

Physically admissible transformations of quantum systems are described by \emph{quantum channels} (in the Schrödinger picture), i.e., by elements of $$\mathcal{T}\left( \IC^d, \IC^{d'}\right):= \{T:\mathcal{M}_d\to \mathcal{M}_{d'}~|~ T\textrm{ is linear, completely positive, and trace-preserving}\}.$$ Here, we call $T$ \emph{completely positive} iff $T\otimes \textrm{id}_{\mathcal{M}_n}$ is positivity-preserving for every $n\in\IN$. This definition guarantees that a quantum channel maps states to states and that this is still the case when embedding the quantum system of interest into a larger system with trivial evolution on the environmental subsystem.\\
We will also use the shorthand $\mathcal{T}_d := \mathcal{T}\left( \IC^d, \IC^{d}\right)$ for channels with equal input and output dimension. 

\subsection{Quantum Dynamical Semigroups}
It is a foundational postulate in quantum theory that the dynamics of a closed quantum system can be described in terms of a Schrödinger equation, which gives rise to a $1$-parameter group of unitaries. For open quantum systems, we will work with $1$-parameter semigroups.

\begin{dff}\emph{(Continuous dynamical semigroups)}\label{DffContDynSemigroup}\\
A family of linear maps $T_t:\mathcal{M}_d\to\mathcal{M}_d$ with time parameter $t\in\IR_+$ is called a \emph{dynamical semigroup} if $\forall t,s\in\IR_+ =[0,\infty) :T_tT_s=T_{t+s}$ and $T_0=Id$. If in addition the map $t\mapsto T_t$ is continuous (we are working on finite dimensional spaces, so there is no need to specify the type of continuity here), then the family is called a \emph{continuous dynamical semigroup}.
\end{dff}

It is well known that such continuous dynamical semigroups can be represented via a generator. I.e., if $\{ T_t\}_{t\geq 0}$ is a continuous dynamical semigroup, then there exists a linear map $L:\mathcal{M}_d\to\mathcal{M}_d$ s.t.~$T_t = e^{t L}$ for all $t\geq 0$.\\

When requiring such a semigroup to consist of physically admissible evolutions of a quantum system, i.e., of quantum channels, the question arises of what the corresponding generators are. This was answered in the following celebrated

\begin{thm}\emph{(Generators of quantum dynamical semigroups -- GKLS, \cite{Gorini.1976,Lindblad.1976})}\label{ThmGKLS}\\
A linear map $L:\mathcal{M}_d\to\mathcal{M}_d$ is the generator of a continuous dynamical semigroup of quantum channels if and only if it can be written as
\begin{align}
L(\rho) &= i[\rho,H] + \sum_j \L_j \rho \L_j^\dagger - \frac{1}{2}\lbrace \L_j^\dagger \L_j, \rho\rbrace, \label{eq::LindbladGenerator}
\end{align}
where $H=H^\dagger\in\mathcal{M}_d$ is self-adjoint and $\lbrace \L_j\rbrace_j$ is a set of matrices in $\mathcal{M}_d$. Here, $\{\cdot ,\cdot\}$ denotes the anti-commutator.
\end{thm}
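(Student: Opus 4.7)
The plan is to prove the two implications separately. For the \emph{sufficiency} direction, decompose $L = \Psi + \Phi$, where $\Psi(\rho) = \sum_j \mathcal{L}_j \rho \mathcal{L}_j^\dagger$ and $\Phi(\rho) = A\rho + \rho A^\dagger$ with $A = -iH - \tfrac{1}{2}\sum_j \mathcal{L}_j^\dagger \mathcal{L}_j$. Then $e^{t\Phi}(\rho) = e^{tA}\rho e^{tA^\dagger}$ is a conjugation by a single Kraus operator, hence completely positive, and $e^{t\Psi} = \sum_{n \geq 0} (t^n/n!)\,\Psi^n$ is a non-negative combination of powers of the completely positive map $\Psi$, so itself completely positive for $t \geq 0$. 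The Lie--Trotter product formula
\begin{align*}
    e^{tL} = \lim_{n \to \infty}\bigl(e^{t\Phi/n}\,e^{t\Psi/n}\bigr)^n
\end{align*}
then realises $e^{tL}$ as a limit of products of completely positive maps, hence completely positive. Trace preservation reduces to $\Tr[L(\rho)] = \Tr[\rho(A + A^\dagger + \sum_j \mathcal{L}_j^\dagger \mathcal{L}_j)] = 0$, which holds by cyclicity and the choice of $A$.

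For the \emph{necessity} direction, I first recover the generator via $L = \lim_{t \to 0^+}(T_t - \mathrm{id})/t$, which exists because the semigroup property together with continuity implies smoothness in finite dimensions. The key analytic tool is the Choi matrix $J(S) = (S \otimes \mathrm{id})(|\Omega\rangle\langle\Omega|)$ with $|\Omega\rangle = \sum_{i=1}^d |i\rangle|i\rangle$, under which complete positivity of $T_t$ becomes $J(T_t) \geq 0$ and trace preservation becomes $\Tr_1 J(T_t) = \mathds{1}_d$. Since $J(T_0) = |\Omega\rangle\langle\Omega|$ is a rank-one projector, for every $|\phi\rangle \perp |\Omega\rangle$ the function $t \mapsto \langle\phi|J(T_t)|\phi\rangle$ is non-negative on $\IR_+$ and vanishes at $t = 0$; differentiating at $0$ yields $\langle\phi|J(L)|\phi\rangle \geq 0$. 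Hence $P\,J(L)\,P \geq 0$, where $P = \mathds{1} - |\Omega\rangle\langle\Omega|/d$ projects onto $|\Omega\rangle^\perp$.

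With this structural input I spectrally decompose $P\,J(L)\,P = \sum_j |v_j\rangle\langle v_j|$ and use the vectorisation correspondence $|v_j\rangle = (\mathcal{L}_j \otimes \mathds{1})|\Omega\rangle$ to extract operators $\mathcal{L}_j \in \mathcal{M}_d$. Transported back to the map picture, this PSD block produces exactly the dissipative gain term $\rho \mapsto \sum_j \mathcal{L}_j \rho \mathcal{L}_j^\dagger$. The residual components of $J(L)$---the two off-diagonal blocks coupling $|\Omega\rangle$ to $|\Omega\rangle^\perp$ and the scalar piece along $|\Omega\rangle\langle\Omega|$---are, by the same vectorisation, the Choi matrix of a map of the form $\rho \mapsto X\rho + \rho X^\dagger$ for some $X \in \mathcal{M}_d$. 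Imposing $\Tr[L(\rho)] = 0$ for every $\rho$ forces $X + X^\dagger = -\sum_j \mathcal{L}_j^\dagger \mathcal{L}_j$, pinning down the Hermitian part of $X$; the anti-Hermitian part is unconstrained, so writing $X = -iH - \tfrac{1}{2}\sum_j \mathcal{L}_j^\dagger \mathcal{L}_j$ with $H = H^\dagger$ and assembling the two contributions recovers the Lindblad form in (\ref{eq::LindbladGenerator}).

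The principal obstacle lies in the necessity direction, specifically in the bookkeeping needed to translate the block decomposition of $J(L)$ against $|\Omega\rangle$ into a clean Lindblad expression on $\mathcal{M}_d$. Convention choices for the Choi--Jamio\l kowski isomorphism (which tensor factor carries $L$, and column- versus row-vectorisation) must be fixed consistently, so that the PSD block of $J(L)$ delivers conjugations $\mathcal{L}_j \rho \mathcal{L}_j^\dagger$ rather than their transposes, and so that trace preservation produces the anti-commutator with the correct factor $\tfrac{1}{2}$. The sufficiency direction, by contrast, is essentially a one-line Trotter argument once the splitting $L = \Psi + \Phi$ is in place.
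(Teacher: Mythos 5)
This theorem is stated in the paper as a cited classical result (GKLS, \cite{Gorini.1976,Lindblad.1976}) and is not proved there, so there is no in-paper argument to compare against. Your sketch is the standard finite-dimensional proof (the one found, e.g., in Wolf's lecture notes and in modern treatments of the GKLS theorem), and it is sound in outline. The sufficiency direction is complete as stated: $e^{t\Phi}$ and $e^{t\Psi}$ are completely positive, complete positivity is closed under products and limits, and $\Tr[L(\rho)]=0$ integrates to trace preservation of $e^{tL}$. In the necessity direction the two points that need explicit care, beyond the convention bookkeeping you already flag, are: (i) $J(L)$ is Hermitian (because each $T_t$ is Hermiticity-preserving, hence so is the difference quotient), which is what licenses both the spectral decomposition $P\,J(L)\,P=\sum_j |v_j\rangle\langle v_j|$ with non-negative weights and the symmetric form of the residual; and (ii) the residual $J(L)-P\,J(L)\,P$ equals $|u\rangle\langle\Omega|+|\Omega\rangle\langle u|$ for a single vector $|u\rangle$ obtained by absorbing the scalar piece along $|\Omega\rangle\langle\Omega|$ into the cross terms, after which $|u\rangle=(X\otimes\mathds{1})|\Omega\rangle$ yields the map $\rho\mapsto X\rho+\rho X^\dagger$. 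Neither is a gap, just a detail to write out; with those supplied, your argument correctly recovers \eqref{eq::LindbladGenerator}.
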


For such generators, often called \emph{GKLS} or \emph{Lindblad generators}, we refer to the term $i[\cdot,H]$ as Hamiltonian part and to $\sum_j \L_j \cdot \L_j^\dagger - \frac{1}{2}\lbrace \L_j^\dagger \L_j, \cdot\rbrace$ as dissipative part with Lindbladians $\{ \L_j\}_j$.\\

We will call a quantum channel \emph{Markovian} if is an element of a quantum dynamical semigroup.

\section{Markovian Divisibility}\label{SctMarkovianDivisibility}

The main motivation for our work is the following problem: Given a quantum channel, decide whether it comes from a (possibly time-dependent) Lindblad master equation. We take two different perspectives on this task to motivate our definitions. \\
The first perspective is that of differential equations. Namely, we want to understand which quantum channels can arise as a solution of a time-dependent master equation of the form $\frac{\textrm{d}}{\textrm{dt}} T_t = L(t) T_t$, where $L(t)$ is a time-dependent Lindblad generator. More generally, we want to study the possible solutions of a linear ordinary differential equation $\frac{\textrm{d}}{\textrm{dt}} T_t = G(t) T_t$, where $t\mapsto G(t)\in\mathcal{G}$, with $\mathcal{G}\subset\mathcal{M}_d$ a fixed set of generators.\\
Our second perspective on the problem comes from the semigroup structure of the solutions to time-independent master equations. Namely, each such equation corresponds to a quantum dynamical semigroup. If we now also want to take into account a possible time-dependence of the generator while still preserving the semigroup structure, we can consider the semigroup generated by all elements of quantum dynamical semigroups. On an intuitive level, the question about solutions of master equations which we asked above now becomes the question of whether a given quantum channel is an element of this semigroup. I.e., we are dealing with the membership problem for this semigroup. And again, we can generalize the question by going from Lindblad generators to general generators.

\subsection{Markovian Divisibility w.r.t.~general Sets of Generators}
The two perspectives given above lead us to two slightly different definitions. In the first, we focus on the semigroup structure.

\begin{dff}\emph{(Markovian Divisibility)}\label{DffMarkovianDivisibleGeneral}\\
Let $\mathcal{G}\subset\mathcal{M}_d$ be a set of matrices, whose elements we call \emph{generators}. We define the set $$\mathcal{D}_\mathcal{G}:=\{ T\in\mathcal{M}_d ~|~\exists n\in\IN, \textrm{ generators }\{G_i\}_{1\leq i\leq n}\subset\mathcal{G} \textrm{ s.t.~} \prod\limits_{i=1}^n e^{G_i}=T \}.$$
We call the closure $\overline{\mathcal{D}}_\mathcal{G}$ the set of linear maps that are \emph{Markovian divisible w.r.t.~}$\mathcal{G}$. 
\end{dff}

When translating the mathematical motivation of semigroups to a more physical motivation, Definition \ref{DffMarkovianDivisibleGeneral} can be seen as an approach to the question of which linear maps can be arbitrarily well approximated using alternating exponentials of a fixed set of (\emph{control}) generators.\\

Now, we give a definition based more on the perspective of differential equations determining the overall evolution on infinitesimal time intervals, while keeping the semigroup structure in mind.

\begin{dff}\emph{(Infinitesimal Markovian Divisibility)}\label{DffInfinitesimalDivisibilityGeneral}\\
Let $\mathcal{G}\subset\mathcal{M}_d$ be a compact and convex set of matrices containing $0\in\mathcal{M}_d$. We will again refer to elements of $\mathcal{G}$ as \emph{generators}. We define the set 
\begin{align*}
\mathcal{I}_\mathcal{G} 
:= \{T\in\mathcal{M}_d ~|~ &\forall\varepsilon >0 ~\exists n\in\IN, \textrm{ generators }\{G_j\}_{1\leq j\leq n}\subset\mathcal{G} \\ &\textrm{s.t.~}(i)\norm{e^{G_j} - \mathds{1}_d}\leq\varepsilon~\forall j\textrm{ and }(ii)\prod\limits_{j=1}^n e^{G_j}=T \}.
\end{align*}
We call the closure $\overline{\mathcal{I}}_\mathcal{G}$ the set of linear maps that are \emph{infinitesimal Markovian divisible w.r.t.~}$\mathcal{G}$. 
\end{dff}

\begin{rmk}\label{RmkCompactnessWLOG}
In the definition, we require $\mathcal{G}$ to be compact. This can be assumed w.l.o.g.~First, closedness can be assumed w.l.o.g.~since for non-closed $\mathcal{G}_0$ we have $\overline{\mathcal{I}}_{\mathcal{G}_0} =\overline{\mathcal{I}}_{\overline{\mathcal{G}_0}}$. Second, boundedness can also be assumed w.l.o.g.~Namely, suppose $\tilde{\mathcal{G}}\subset\mathcal{M}_d$ is an unbounded closed and convex set with $0\in\tilde{\mathcal{G}}$ and $T\in\mathcal{I}_{\tilde{\mathcal{G}}}$. Then, by definition, $\forall\varepsilon >0$ $\exists n\in\IN$ and $\{G_j\}_{1\leq j\leq n}\subset\tilde{\mathcal{G}}$ s.t.~$\norm{e^{G_j} - \mathds{1}_d}\leq\varepsilon$ and $\prod_{j=1}^n e^{G_j}=T$. By convexity, also $\frac{1}{N}G_j\in\tilde{\mathcal{G}}~\forall1\leq j\leq n$ for every $N>1$. By continuity of the matrix exponential, there exists $N_0\in\IN$ s.t.~$\norm{e^{\frac{1}{N} G_j} - \mathds{1}_d}\leq\varepsilon$ for all $N\geq N_0$. And clearly we can write $T=\prod_{j=1}^n e^{G_j} = \prod_{j=1}^n \left(e^{\frac{1}{N}G_j}\right)^N$. Thus, as $\norm{\frac{1}{N}G_j}\to 0$ as $N\to\infty$, we conclude that for every $B>0$ we have $T\in \mathcal{I}_{\tilde{\mathcal{G}}_{\leq B}}$, where $\tilde{\mathcal{G}}_{\leq B}:=\{G\in \tilde{\mathcal{G}}~|~\norm{G}\leq B\}$. Hence, we can impose an arbitrary (non-zero) norm bound on our generators without changing the set of infinitesimal Markovian divisible channels.\\
Therefore, we are justified in using Definition \ref{DffInfinitesimalDivisibilityGeneral} also for non-compact $\mathcal{G}$ (in particular, Lindblad generators and transition rate matrices).
\end{rmk}





\begin{rmk}\label{RmkInfinitesimalMarkovianversusMarkovianDivisible}
By continuity of the matrix exponential, it is easy to see that, if $G\in\mathcal{G}$ implies $\frac{1}{n}G\in\mathcal{G}$ for all $n\in\IN$, then $\mathcal{D}_\mathcal{G}=\mathcal{I}_\mathcal{G}$. This is in particular the case if $\mathcal{G}$ satisfies the assumptions of Definition \ref{DffInfinitesimalDivisibilityGeneral}.\\
If, however, $\mathcal{G}$ does not have this property, then $(i)$ in the definition of $\mathcal{I}_\mathcal{G}$ will in general lead to $\mathcal{I}_\mathcal{G}\neq\mathcal{D}_\mathcal{G}$. (E.g., $\mathcal{I}_\mathcal{G}$ could be empty even if $\mathcal{D}_\mathcal{G}$ is not).
\end{rmk}

When specifying $\mathcal{G}$ to be the set of Lindblad generators, and thus the linear maps of interest to be quantum channels, Definitions \ref{DffMarkovianDivisibleGeneral} and \ref{DffInfinitesimalDivisibilityGeneral} become connected to quantum channels arising from master equations. Studying such channels via a notion of Markovian divisibility into infinitesimal pieces was first proposed in \cite{Wolf.2008}. Next, we discuss some results of that work.

\subsection{Infinitesimal Markovian Divisibility of Quantum Channels}\label{SctMarkovianDivisibilityQuantum}
For ease of notation, we will denote by $\mathcal{I}_d$ the set $\mathcal{I}_\mathcal{G}$ for the specific choice of $\mathcal{G}$ being the set of Lindblad generators acting on $d\times d$-matrices. (This is an unbounded set of generators, but we can nevertheless use it in Definition \ref{DffInfinitesimalDivisibilityGeneral} by Remark \ref{RmkCompactnessWLOG}.) Then, the set $\overline{\mathcal{I}}_d$ is the set of \emph{infinitesimal Markovian divisible quantum channels} as defined in \cite{Wolf.2008}.

When referring to these channels, we will sometimes drop the ``Markovian'' for convenience. This can also be justified in a rigorous sense, see Theorem $16$ in \cite{Wolf.2008}.\\



While some insight into the structure of infinitesimal Markovian divisible quantum channels has been obtained in \cite{Wolf.2008}, so far there are no simple-to-check criteria for infinitesimal divisibility for a general dimension $d$. Such criteria are the main focus of this work.\\

A straightforward necessary criterion for infinitesimal divisibility is already observed in \cite{Wolf.2008}, namely we have as a direct consequence of multiplicativity and continuity of the determinant:

\begin{prp}\label{PrpNonnegDet}
An infinitesimal divisible quantum channel $T$ satisfies $\det (T) \geq 0$.
\end{prp}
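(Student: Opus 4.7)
The plan is to reduce to finite products of exponentials of Lindblad generators via the closure definition and then exploit multiplicativity of the determinant together with the matrix identity $\det(e^L) = e^{\Tr(L)}$. Concretely, I fix $T \in \overline{\mathcal{I}}_d$ and choose a sequence $(T_n)_{n \in \IN} \subset \mathcal{I}_d$ with $T_n \to T$. By definition of $\mathcal{I}_d$, each $T_n$ admits an \emph{exact} factorization $T_n = \prod_{j=1}^{k_n} e^{G_j^{(n)}}$, where the $G_j^{(n)}$ are Lindblad generators regarded as linear maps on $\mathcal{M}_d$ (i.e.\ as elements of $\mathcal{M}_{d^2}$).

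By multiplicativity of the determinant applied to this factorization, together with $\det(e^L) = e^{\Tr(L)}$, I obtain
\begin{equation*}
\det(T_n) \;=\; \prod_{j=1}^{k_n} \det\bigl(e^{G_j^{(n)}}\bigr) \;=\; \prod_{j=1}^{k_n} e^{\Tr(G_j^{(n)})},
\end{equation*}
where $\Tr$ is the superoperator trace on the $d^2$-dimensional space $\mathcal{M}_d$. The key point is that $\Tr(G_j^{(n)}) \in \IR$. This follows from the fact that every Lindblad generator is Hermiticity-preserving, which is immediate from the GKLS form in Theorem \ref{ThmGKLS}: picking a Hilbert--Schmidt orthonormal basis $\{E_k\}_{k=1}^{d^2}$ of $\mathcal{M}_d$ consisting of Hermitian matrices, one has $\Tr(L) = \sum_k \Tr(E_k L(E_k))$, and each summand is the trace of a product of two Hermitian matrices and therefore real. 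Consequently $\det\bigl(e^{G_j^{(n)}}\bigr) > 0$, so $\det(T_n) > 0$ for every $n$.

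Finally, continuity of the determinant on $\mathcal{M}_{d^2}$ yields $\det(T) = \lim_{n \to \infty} \det(T_n) \geq 0$, as claimed. There is no genuine obstacle here: once one has the identification $\det(e^L) = e^{\Tr(L)}$ and the observation that Hermiticity-preservation forces the superoperator trace of each Lindblad generator to be real, the statement is a direct corollary of these two facts together with the closure definition of $\overline{\mathcal{I}}_d$.
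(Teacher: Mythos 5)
Your proof is correct and follows essentially the same route the paper indicates (multiplicativity and continuity of the determinant, reduced to $\det(e^G)=e^{\Tr(G)}>0$ for a single Lindblad generator $G$); you merely spell out the detail that Hermiticity preservation forces the superoperator trace to be real, which the paper leaves implicit.
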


This is, to our knowledge, the only necessary criterion for infinitesimal divisibility known so far that holds in any finite dimension.\\

For the special case of qubit channels, the set of infinitesimal divisible channels can be explicitly characterized making use of the Lorentz normal form. (The latter is discussed, e.g., in \cite{Verstraete.20030122}).


\begin{thm}\emph{(Infinitesimal divisible qubit channels \cite{Wolf.2008} - Informal)}\label{ThmQubitInfinitesimalDivisible}\\
Let $T:\mathcal{M}_2\to\mathcal{M}_2$ be a generic qubit channel with Lorentz normal form 
$\begin{pmatrix}
1 & 0\\
0 & \Delta
\end{pmatrix}$.\\
$T$ is infinitesimal Markovian divisible if and only if $0\leq \det(\Delta)\leq \textrm{s}_{\textrm{min}}^2$, where $\textrm{s}_{\textrm{min}}$ is the smallest singular value of $\Delta$.
\end{thm}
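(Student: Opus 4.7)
The plan is to first reduce $\Delta$ to diagonal form and then prove the two directions separately. A generic qubit channel can, via the filters realizing the Lorentz normal form, be assumed to have $\Delta=\mathrm{diag}(d_1,d_2,d_3)$ with $|d_1|\leq|d_2|\leq|d_3|$, so that $\mathrm{s}_{\min}=|d_1|$ and $\det\Delta=d_1d_2d_3$. The subtle point in this reduction is that the filters are not themselves channels, so preserving infinitesimal Markovian divisibility under filtering requires an auxiliary argument, e.g., approximating each filter by near-invertible completely positive maps and transferring infinitesimal decompositions along a limit.

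For necessity, $\det(\Delta)\geq 0$ is immediate from Proposition \ref{PrpNonnegDet}. For the upper bound $\det(\Delta)\leq \mathrm{s}_{\min}^2$, I would apply the general strategy of Theorem \ref{ThmNecessaryCondMarkovianDivisibleGeneral} with $k=1$ and $p=2$, but to the $3\times 3$ restriction $\tilde L$ of a unital Lindblad generator to the traceless Pauli subspace, rather than to the full $4\times 4$ channel matrix. The requisite spectral inequality
\begin{equation*}
\Tr[\tilde L + \tilde L^\ast] \;\leq\; 2\,\lambda_1^\uparrow(\tilde L + \tilde L^\ast)
\end{equation*}
follows by direct computation from the GKLS form \eqref{eq::LindbladGenerator}: the Hamiltonian term is antihermitian and drops out of $\tilde L+\tilde L^\ast$, while the unital canonical decomposition $\sum_\alpha \gamma_\alpha(\sigma_\alpha\cdot\sigma_\alpha-\cdot)$ produces the three diagonal eigenvalues $-4\gamma_y-4\gamma_z$, $-4\gamma_x-4\gamma_z$, $-4\gamma_x-4\gamma_y$ of $\tilde L+\tilde L^\ast$ with $\gamma_\alpha\geq 0$, from which one reads off $\lambda_2^\uparrow+\lambda_3^\uparrow\leq\lambda_1^\uparrow$ by inspection. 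Trotterization, multiplicativity of the determinant, and super-multiplicativity of smallest singular values then yield $|\det\Delta|\leq\mathrm{s}_{\min}^2$ along products of such exponentials, and the conclusion passes to the closure by continuity.

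For sufficiency, I would exhibit each admissible $\Delta$ as a limit of products of short Lindblad exponentials. Starting from the identity channel and flowing along the three mutually commuting Pauli-Lindbladians $\sigma_\alpha\cdot\sigma_\alpha-\cdot$ with non-negative rates $\gamma_x,\gamma_y,\gamma_z$ produces $\mathrm{diag}(e^{-2(\gamma_y+\gamma_z)}, e^{-2(\gamma_x+\gamma_z)}, e^{-2(\gamma_x+\gamma_y)})$, and inverting this linear system shows that $(d_1,d_2,d_3)\in(0,1]^3$ (with WLOG $d_1\leq d_2\leq d_3$) is reachable with all $\gamma_\alpha\geq 0$ exactly when $d_2 d_3\leq d_1$, i.e., exactly when $\det\Delta\leq d_1^2$. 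Sign patterns with an even number of negative $d_i$'s are then reached by conjugating with $\pi$-rotations around the Pauli axes (themselves Hamiltonian exponentials), each of which flips two of the $d_i$'s while preserving $\det\Delta$ and $\mathrm{s}_{\min}$.

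The hardest part I expect to be the sufficiency construction, specifically verifying that the explicit Pauli-Lindblad-plus-rotation family genuinely exhausts the admissible region, including closure handling at the boundary $d_1=0$ where the channel becomes singular. A conceptually subtle secondary obstacle is the Lorentz-filter reduction, since filters are not channels and their effect on infinitesimal decompositions must be established by a limiting argument involving near-singular CP maps.
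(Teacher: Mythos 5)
First, a point of reference: the paper does not prove Theorem \ref{ThmQubitInfinitesimalDivisible}. It is quoted, explicitly marked ``Informal'', from \cite{Wolf.2008}, so there is no in-paper proof to compare against. Your skeleton --- Lorentz normal form, necessity via a spectral inequality for the $3\times 3$ Bloch block plus Trotterization, sufficiency via commuting Pauli dissipators and $\pi$-rotations --- matches the strategy of the cited work, and your sufficiency computation is correct: with $0<d_1\le d_2\le d_3\le 1$ the three constraints $\gamma_x,\gamma_y,\gamma_z\ge 0$ collapse to the single condition $d_2d_3\le d_1$, i.e.\ $\det\Delta\le s_{\min}^2$.

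Two genuine gaps remain. The first is in the necessity direction: you verify $\Tr[\tilde L+\tilde L^\ast]\le 2\lambda_1^\uparrow(\tilde L+\tilde L^\ast)$ only for \emph{unital} generators in Pauli-diagonal form, but the infinitesimal factors of a generic qubit channel are arbitrary, generally non-unital, Lindblad exponentials, so your verification does not cover the objects your Trotter argument multiplies together. The inequality does hold in general: for a single Lindbladian $\L=(\vec u+i\,\vec v)\cdot\vec\sigma$ the linear Bloch block of the dissipator is $M=2(\vec u\vec u^{T}+\vec v\vec v^{T})-2(|\vec u|^{2}+|\vec v|^{2})\mathds{1}_3$, whose symmetrization has eigenvalues $-4\mu_2,\,-4\mu_1,\,-4(\mu_1+\mu_2)$ with $\mu_1,\mu_2\ge 0$ the nonzero eigenvalues of $\vec u\vec u^{T}+\vec v\vec v^{T}$, giving $\Tr[M+M^{T}]=2\lambda_1^\uparrow(M+M^{T})$ exactly; this computation must replace your Pauli-channel one. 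The second gap is the filter reduction, which you flag but do not resolve, and which is not a routine limit: for non-unital channels the Lorentz transformations are boosts, not rotations, so they change singular values, and the corresponding filters $X\mapsto AXA^{\dagger}$ are completely positive but not trace-preserving. Consequently your necessity argument only bounds $\det$ and $s_{\min}$ of the linear Bloch part $M(T)=\prod_i M(e^{L_i})$, not of the normal form $\Delta$ the theorem speaks about, and your sufficiency argument only divides the normal form, not $T$ itself. Transferring infinitesimal divisibility and the determinant/singular-value inequality across the filtering is where most of the work in \cite{Wolf.2008} actually lies; ``approximating each filter by near-invertible CP maps'' does not address it, since the approximants are still not channels. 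Until both points are filled in, the proposal is a correct outline rather than a proof.
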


This characterization serves as one motivation for our results in higher dimensions, which we derive in Subsection \ref{SctCriteriaQuantum}.

\section{Necessary Criteria for Markovian Divisibility}\label{SctCriteriaMarkovianDivisibility}
We now develop necessary criteria for a linear map to be (infinitesimal) Markovian divisible. More precisely, our discussion aims towards establishing inequalities of the form 
\begin{align}
\lvert \det (T)\rvert\leq \left(\prod\limits_{i=1}^k s_i^\uparrow (T) \right)^p.\label{eq:GoalInequality}
\end{align} We first present some results for the case of general linear maps \& generators and later combine these observations with a more detailed analysis for quantum channels \& Lindblad generators and stochastic matrices \& transition rate matrices, respectively.

\subsection{General Sets of Generators}\label{SctCriteriaGeneral}
We first observe that if each of two matrices satisfies the desired inequality \eqref{eq:GoalInequality}, then so does the product of the matrices.
\begin{lmm}\label{LmmReductionSingleFactors}
Let $T_1,T_2\in\mathcal{M}_d$. Suppose that $1\leq k\leq d$ and $p>0$ are s.t.~$$\lvert \det(T_j)\rvert\leq \left(\prod\limits_{i=1}^k s_i^\uparrow (T_j) \right)^p $$ holds for $j=1,2$. Then also $$ \lvert\det(T_1T_2)\rvert\leq \left(\prod\limits_{i=1}^k s_i^\uparrow (T_1T_2) \right)^p.$$
\end{lmm}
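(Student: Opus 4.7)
The plan is to combine the two multiplicativity properties that Lemma \ref{LmmReductionSingleFactors} is really about: the determinant is exactly multiplicative, while the product of the $k$ \emph{smallest} singular values is \emph{super}-multiplicative under matrix products. Given those two facts, the statement reduces to chaining inequalities.

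First I would establish (or invoke) the auxiliary inequality
\begin{equation*}
    \prod\limits_{i=1}^k s_i^\uparrow(T_1) \prod\limits_{i=1}^k s_i^\uparrow(T_2) \;\leq\; \prod\limits_{i=1}^k s_i^\uparrow(T_1 T_2), \qquad 1\leq k\leq d.
\end{equation*}
This is the dual of the classical Horn/Weyl sub-multiplicativity for largest singular values, $\prod_{i=1}^{m} s_i^\downarrow(AB)\leq \prod_{i=1}^{m} s_i^\downarrow(A)\prod_{i=1}^{m} s_i^\downarrow(B)$ for $1\leq m\leq d$. Applying the latter with $m=d-k$ and dividing the identity $\lvert\det(AB)\rvert = \lvert\det A\rvert\,\lvert\det B\rvert = \prod_{i=1}^d s_i^\downarrow(A) \prod_{i=1}^d s_i^\downarrow(B)$ by it yields the desired lower bound after re-indexing $s_i^\downarrow$ in terms of $s_i^\uparrow$. (The edge case $k=d$ is the determinant identity itself.)

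With that in hand, the proof is a one-line computation: using multiplicativity of the determinant, the two hypotheses on $T_1$ and $T_2$, and then super-multiplicativity of the product of smallest singular values together with $p>0$,
\begin{align*}
    \lvert\det(T_1T_2)\rvert
    &= \lvert\det(T_1)\rvert\cdot\lvert\det(T_2)\rvert \\
    &\leq \left(\prod\limits_{i=1}^k s_i^\uparrow(T_1)\right)^p \left(\prod\limits_{i=1}^k s_i^\uparrow(T_2)\right)^p \\
    &= \left(\prod\limits_{i=1}^k s_i^\uparrow(T_1)\prod\limits_{i=1}^k s_i^\uparrow(T_2)\right)^p
    \;\leq\; \left(\prod\limits_{i=1}^k s_i^\uparrow(T_1T_2)\right)^p.
\end{align*}

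I do not anticipate a real obstacle here: the only nontrivial ingredient is the super-multiplicativity of products of smallest singular values, which is standard and follows from the Horn-type sub-multiplicativity for largest singular values via the determinant identity as sketched above. The rest is a direct assembly of the hypotheses, and the condition $p>0$ ensures that the order of the inequality is preserved when raising to the $p$-th power.
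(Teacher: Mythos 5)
Your proof is correct and takes essentially the same approach as the paper: the paper also obtains the super-multiplicativity of the product of the $k$ smallest singular values by dividing the determinant identity by the Horn/Weyl sub-multiplicativity inequality for the $d-k$ largest singular values, and then chains the inequalities exactly as you do. No changes needed.
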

\begin{proof}
A well-known majorisation inequality for singular values states that
\begin{align}
    \prod\limits_{i=1}^k s_i^{\downarrow}(AB) \leq \prod\limits_{i=1}^k s_i^{\downarrow}(A)s_i^{\downarrow}(B) \label{eq::SVSubmult}
\end{align}
for any $1\leq k\leq n$ for $n\times n$-matrices $A,B$  (see, e.g., \cite{Horn.1991}, Theorem $3.3.4$). With this we obtain
\begin{align*}
\lvert\det (T_1T_2) \rvert
&= \lvert\det(T_1)\rvert \lvert\det(T_2)\rvert\\
&\leq\left(\prod\limits_{i=1}^k s_i^\uparrow (T_1) \right)^p \left(\prod\limits_{i=1}^k s_i^\uparrow (T_2) \right)^p\\
&=\left(\frac{\lvert\det(T_1)\rvert\lvert\det(T_2)\rvert}{\prod\limits_{i=1}^{d-k}s_i^\downarrow (T_1)s_i^\downarrow(T_2)} \right)^p\\
&\leq\left(\frac{\lvert\det(T_1T_2)\rvert}{\prod\limits_{i=1}^{d-k}s_i^\downarrow (T_1T_2)} \right)^p\\
&=\left(\prod\limits_{i=1}^{k}s_i^\uparrow (T_1T_2) \right)^p,
\end{align*}
as claimed. Here, the first inequality is by assumption, the following step uses $\lvert\det(T_i)\rvert=\prod\limits_{j=1}^{d} s_j^\downarrow (T_i)$, the second inequality is due to Equation \eqref{eq::SVSubmult}, and the last step uses $\lvert\det(T_1T_2)\rvert=\prod\limits_{j=1}^{d} s_j^\downarrow (T_1T_2)$.
\end{proof}

This means that, when trying to establish an inequality of the form \eqref{eq:GoalInequality}, if $T$ is a finite product, it suffices to consider the single factors separately.\\

Now we show that, once we have our desired inequality \eqref{eq:GoalInequality} for non-negative multiples of two separate generators, the exponential of the sum of these two generators also satisfies the inequality. This observation will be particularly useful in our analysis of Lindblad generators.

\begin{lmm}\label{LmmSumofGenerators}
Let $G_1,G_2\in\mathcal{M}_d$. Suppose that $1\leq k\leq d$ and $p>0$ are s.t.~$$\lvert \det(e^{\frac{G_j}{n}})\rvert\leq \left(\prod\limits_{i=1}^k s_i^\uparrow (e^{\frac{G_j}{n}}) \right)^p $$ holds for all $n\in\IN$ and $j=1,2$.
Then also $$ \lvert\det(e^{G_1 + G_2})\rvert\leq \left(\prod\limits_{i=1}^k s_i^\uparrow (e^{G_1 + G_2}) \right)^p.$$
\end{lmm}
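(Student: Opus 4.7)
The plan is to combine the Trotter product formula with the iterated application of Lemma \ref{LmmReductionSingleFactors}. Recall that for any $G_1,G_2\in\mathcal{M}_d$ the Lie–Trotter formula gives
\begin{align*}
    e^{G_1+G_2}
    =
    \lim_{n\to\infty}\left( e^{\frac{G_1}{n}} e^{\frac{G_2}{n}}\right)^n.
\end{align*}
Thus the idea is to establish the desired inequality for the approximants $\left( e^{\frac{G_1}{n}} e^{\frac{G_2}{n}}\right)^n$ and then pass to the limit.

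First, I would apply the hypothesis: each factor $e^{\frac{G_j}{n}}$ with $j\in\{1,2\}$ satisfies the inequality \eqref{eq:GoalInequality} with the given $k$ and $p$. Since $\left( e^{\frac{G_1}{n}} e^{\frac{G_2}{n}}\right)^n$ is a product of $2n$ such factors, I would iterate Lemma \ref{LmmReductionSingleFactors} (which closes the class of matrices satisfying \eqref{eq:GoalInequality} under products of two matrices) $2n-1$ times to conclude
\begin{align*}
    \left\lvert\det\left( e^{\frac{G_1}{n}} e^{\frac{G_2}{n}}\right)^n\right\rvert
    \leq
    \left(\prod_{i=1}^k s_i^\uparrow\left(\left( e^{\frac{G_1}{n}} e^{\frac{G_2}{n}}\right)^n\right)\right)^p
\end{align*}
for every $n\in\IN$.

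Second, I would pass to the limit $n\to\infty$. Both sides are continuous functions of the underlying matrix (the determinant is a polynomial in the entries, and the singular values depend continuously on the matrix, as they are eigenvalues of $\sqrt{A^\ast A}$). Combined with the Trotter convergence above, this immediately yields the desired bound
\begin{align*}
    \lvert\det(e^{G_1+G_2})\rvert
    \leq
    \left(\prod_{i=1}^k s_i^\uparrow(e^{G_1+G_2})\right)^p.
\end{align*}

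There is no real obstacle: the only slightly delicate point is the continuity-based limit argument, but this is standard since all quantities involved (determinant, singular values, matrix exponential of a fixed argument) depend continuously on the underlying matrix, and the Trotter formula gives norm convergence in finite dimensions. The hypothesis is deliberately phrased for every $n\in\IN$ precisely so that the Trotter approximants $e^{G_j/n}$ are covered by the assumption at each step.
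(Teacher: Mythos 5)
Your proposal is correct and follows essentially the same route as the paper's proof: Lie--Trotter, iterated application of Lemma \ref{LmmReductionSingleFactors} to the $2n$ factors of the approximant, and then passing to the limit using continuity of the determinant and the singular values. No gaps.
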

\begin{proof}
By the Lie-Trotter formula, $e^{A+B}=\lim_{n\to\infty} (e^{\frac{A}{n}}e^{\frac{B}{n}})^n$. As both the determinant and the singular values depend continuously on the matrix, we can combine this with (an iterative application of) Lemma \ref{LmmReductionSingleFactors} to see that it suffices to have $\lvert\det (e^{\frac{G_i}{n}})\rvert\leq \left(\prod\limits_{i=1}^k s_i^\uparrow (e^{\frac{G_i}{n}}) \right)^p$ for arbitrary $n\in\IN$. We can summarize this reasoning as follows:
\begin{align*}
    \lvert\det(e^{G_1 + G_2})\rvert
    &= \lim\limits_{n\to\infty}\lvert\det ((e^{\frac{G_1}{n}}e^{\frac{G_2}{n}})^n)\rvert\\
    &\leq\lim\limits_{n\to\infty}\left(\prod\limits_{i=1}^k s_i^\uparrow ((e^{\frac{G_1}{n}}e^{\frac{G_2}{n}})^n) \right)^p\\
    &= \left(\prod\limits_{i=1}^k s_i^\uparrow (e^{G_1 + G_2}) \right)^p,
\end{align*}
where the inequality follows by combining the assumption with Lemma \ref{LmmReductionSingleFactors}.
\end{proof}

\begin{rmk}\label{RmkPrefactorsSmallerThan1}
If $G_j$ in Lemma \ref{LmmSumofGenerators} are normal matrices, then it is easy to see that the assumed inequality for $n=1$ already implies the corresponding inequality for any $n\in\IN$. In general, however, this implication is not true. This can be seen, e.g., by considering $L$ and $\frac{1}{2}L$, with $L$ as given in Example \ref{ExmOptimalityLinearOrder}. Therefore, we make the assumption for all $n\in\IN$. This is also why we formulate Definition \ref{DffInfinitesimalDivisibilityGeneral} for convex sets of generators that contain the $0$-matrix.
\end{rmk}


Next, we discuss how to reduce an inequality of the form \eqref{eq:GoalInequality} for a single matrix exponential to an inequality of eigenvalues of the exponent.

\begin{lmm}\label{LmmSufficientEigAssumptionGeneral}
Suppose that $G\in\mathcal{M}_{d}$ satisfies $\Tr[G+G^\ast] - p \sum\limits_{i=1}^k \lambda_i^\uparrow (G+G^\ast)\leq 0$, then \[
\lvert \det (e^G)\rvert \leq \left(\prod\limits_{i=1}^k s_i^\uparrow (e^G)\right)^p.
\]
\end{lmm}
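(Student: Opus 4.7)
The plan is to reduce the claimed inequality to a comparison of two exponentials, using two standard facts:
\begin{enumerate}[(a)]
  \item $\det(e^G)=e^{\Tr G}$, so $|\det(e^G)|=e^{\Re\Tr G}=e^{\frac{1}{2}\Tr[G+G^\ast]}$;
  \item the log-majorisation inequality of Weyl/Horn which states that for any $A\in\mathcal{M}_d$ and any $1\le m\le d$,
  \[
    \prod_{i=1}^m s_i^\downarrow(e^A)\ \le\ \prod_{i=1}^m e^{\lambda_i^\downarrow\left(\frac{A+A^\ast}{2}\right)},
  \]
  with equality at $m=d$ (cf.\ \cite{Horn.1991}).
\end{enumerate}

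First I would take logarithms to turn the target inequality into the linear inequality
\[
  \tfrac{1}{2}\Tr[G+G^\ast]\ \le\ p\,\log\!\prod_{i=1}^k s_i^\uparrow(e^G).
\]
By the assumption, it therefore suffices to prove
\[
  \sum_{i=1}^k \tfrac{1}{2}\lambda_i^\uparrow(G+G^\ast)\ \le\ \log\!\prod_{i=1}^k s_i^\uparrow(e^G),
\]
i.e.\ a lower bound on the product of the $k$ smallest singular values of $e^G$ in terms of the $k$ smallest eigenvalues of $(G+G^\ast)/2$. This is the complementary version of (b): using the equality case of (b) at $m=d$ together with the inequality (b) applied at $m=d-k$ and dividing, one obtains
\[
  \prod_{i=1}^k s_i^\uparrow(e^G)\ =\ \frac{|\det(e^G)|}{\prod_{i=1}^{d-k} s_i^\downarrow(e^G)}\ \ge\ \frac{e^{\Tr\!\left(\frac{G+G^\ast}{2}\right)}}{\prod_{i=1}^{d-k} e^{\lambda_i^\downarrow\left(\frac{G+G^\ast}{2}\right)}}\ =\ e^{\sum_{i=1}^k \lambda_i^\uparrow\left(\frac{G+G^\ast}{2}\right)}.
\]

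Raising this to the $p$-th power and invoking the hypothesis $\Tr[G+G^\ast]\le p\sum_{i=1}^k \lambda_i^\uparrow(G+G^\ast)$ gives
\[
  \left(\prod_{i=1}^k s_i^\uparrow(e^G)\right)^{p}\ \ge\ e^{\frac{p}{2}\sum_{i=1}^k \lambda_i^\uparrow(G+G^\ast)}\ \ge\ e^{\frac{1}{2}\Tr[G+G^\ast]}\ =\ |\det(e^G)|,
\]
which is the claim.

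The only non-trivial ingredient is the log-majorisation (b); everything else is bookkeeping. So the main care point is to cite the right version of Weyl/Horn and to correctly pass from the majorisation for the $k$ \emph{largest} singular values (the standard statement) to the one for the $k$ \emph{smallest}, which is what actually appears in the hypothesis. Once this ``complementary'' form is in hand, the computation above is essentially one line and no further assumption on $G$ (e.g.\ normality) is needed.
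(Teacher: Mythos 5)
Your proof is correct and follows essentially the same route as the paper: both arguments rewrite $\prod_{i=1}^{k}s_i^\uparrow(e^G)$ as $|\det(e^G)|/\prod_{i=1}^{d-k}s_i^\downarrow(e^G)$, bound the denominator via the log-majorisation $\prod_{i=1}^{m}s_i^\downarrow(e^G)\leq\prod_{i=1}^{m}e^{\lambda_i^\downarrow(\Re G)}$, and then invoke the eigenvalue hypothesis. The only cosmetic differences are that the paper squares and takes a square root at the end rather than carrying the factor $\tfrac12$, and justifies $|\det(e^G)|=\det(e^{(G+G^\ast)/2})$ via Lie--Trotter where you use $\det(e^G)=e^{\Tr G}$ directly.
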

\begin{proof}
We observe that 
\begin{align*}
    \prod\limits_{i=1}^k s_i^\uparrow (e^G)
    &= \frac{\lvert\det (e^G)\rvert}{\prod\limits_{i=1}^{d-k} s_i^\downarrow (e^G)}
    \geq \frac{\lvert\det (e^G)\rvert}{\prod\limits_{i=1}^{d - k} s_i^\downarrow (e^{\tfrac{1}{2}(G+G^\ast)})}
    = \frac{\det (e^{\tfrac{1}{2} (G+G^\ast)})}{\prod\limits_{i=1}^{d-k} e^{\tfrac{1}{2}\lambda_i^\downarrow (G+G^\ast)}}
    = \prod\limits_{i=1}^k e^{\tfrac{1}{2}\lambda_i^\uparrow (G+G^\ast)},
\end{align*}
where we used $\prod\limits_{i=1}^{d-k} s_i^{\downarrow}(e^G) \leq \prod\limits_{i=1}^{d-k} s_i^{\downarrow}(e^{\Re(G)})$ (see, e.g., \cite{Bhatia.1997}, p.$259$), as well as $\lvert\det (e^G)\rvert=\det (e^{\tfrac{1}{2} (G+G^\ast)})$, which can be seen via Lie-Trotter. With this we now obtain 
\begin{align*}
    |\det (e^G)|^2
    &= e^{\Tr[G+G^\ast]}
    \leq \left(e^{\sum\limits_{i=1}^k \lambda_i^\uparrow (G+G^\ast)}\right)^p
    = \left(\prod\limits_{i=1}^{k} e^{\tfrac{1}{2}\lambda_i^\uparrow (G+G^\ast)} \right)^{2p}
    \leq \left( \prod\limits_{i=1}^k s_i^\uparrow (e^G)\right)^{2p},
\end{align*}
where the first inequality is exactly our assumption. Now we take the square root and obtain the claimed inequality.
\end{proof}

We summarize the results of the foregoing discussion for Markovian divisibility in the following 
\begin{thm}\label{ThmNecessaryCondMarkovianDivisibleGeneral}
Let $\mathcal{G}\subseteq\mathcal{M}_d$ be a set of generators. Let $T\in\overline{\mathcal{D}}_\mathcal{G}$ and suppose that every $G\in\mathcal{G}$ satisfies $\Tr[G+G^\ast] - p \sum\limits_{i=1}^k \lambda_i^\uparrow (G+G^\ast)\leq 0$. Then $\lvert\det (T)\rvert\leq \left(\prod\limits_{i=1}^k s_i^\uparrow (T) \right)^p$.
\end{thm}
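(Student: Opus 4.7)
The plan is to assemble the three lemmas established just before the theorem into a single composite argument, plus a continuity step to pass from $\mathcal{D}_\mathcal{G}$ to its closure $\overline{\mathcal{D}}_\mathcal{G}$.

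First I would fix $T \in \mathcal{D}_\mathcal{G}$ and unpack Definition \ref{DffMarkovianDivisibleGeneral}, so that $T = \prod_{i=1}^n e^{G_i}$ for some generators $G_i \in \mathcal{G}$. The spectral hypothesis $\Tr[G_i + G_i^\ast] - p\sum_{j=1}^k \lambda_j^\uparrow(G_i + G_i^\ast) \leq 0$ is assumed to hold for every $G \in \mathcal{G}$, so it holds in particular for each $G_i$. By Lemma \ref{LmmSufficientEigAssumptionGeneral}, this gives
\[
|\det(e^{G_i})| \leq \left(\prod_{j=1}^k s_j^\uparrow(e^{G_i})\right)^p \quad\text{for each } i = 1, \dots, n.
\]
Each single factor $e^{G_i}$ thus satisfies the desired inequality \eqref{eq:GoalInequality}.

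Next I would iterate Lemma \ref{LmmReductionSingleFactors} along the product. The lemma covers only two factors, but a straightforward induction on $n$ shows that the inequality is preserved under any finite product: if $T' = \prod_{i=1}^{n-1} e^{G_i}$ satisfies the bound by the inductive hypothesis, then applying the lemma to $T'$ and $e^{G_n}$ yields the bound for $T = T' \cdot e^{G_n}$. This establishes the theorem for every $T \in \mathcal{D}_\mathcal{G}$.

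Finally, I would extend to $\overline{\mathcal{D}}_\mathcal{G}$ by a continuity argument. Given $T \in \overline{\mathcal{D}}_\mathcal{G}$, pick a sequence $(T_m)_{m\in\IN} \subset \mathcal{D}_\mathcal{G}$ with $T_m \to T$. Both the map $A \mapsto |\det(A)|$ and each map $A \mapsto s_j^\uparrow(A)$ are continuous on $\mathcal{M}_d$, so passing to the limit in the inequality $|\det(T_m)| \leq \left(\prod_{j=1}^k s_j^\uparrow(T_m)\right)^p$ yields the claim for $T$. There is no real obstacle here; the proof is essentially bookkeeping, and the only point to be slightly careful about is that the inductive step uses exactly the two-factor form of Lemma \ref{LmmReductionSingleFactors}, which applies unchanged at each stage because the hypothesis on $k$ and $p$ is the same throughout.
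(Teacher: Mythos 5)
Your proposal is correct and follows essentially the same route as the paper: restrict to $T\in\mathcal{D}_\mathcal{G}$ by continuity of the determinant and singular values, apply Lemma \ref{LmmSufficientEigAssumptionGeneral} to each factor $e^{G_i}$, and combine the factors via (iterated) Lemma \ref{LmmReductionSingleFactors}. The only difference is that you spell out the induction over the $n$ factors, which the paper leaves implicit.
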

\begin{proof}
By continuity of the determinant and the singular values, we can restrict our attention to $T\in\mathcal{D}_G$. In that case, there exist $n\in\IN$ and generators $\{G_i\}_{1\leq i\leq n}\subset\mathcal{G}$ s.t.~$\prod\limits_{i=1}^n e^{G_i}=T$. By Lemma \ref{LmmReductionSingleFactors}, it suffices to have the desired inequality for each factor $e^{G_i}$. These now satisfy the inequality by Lemma \ref{LmmSufficientEigAssumptionGeneral}.
\end{proof}

We obtain an analogous result for infinitesimal Markovian divisibility:
\begin{crl}\label{CrlNecessaryCondInfinitesimalMarkovianDivisibleGeneral}
Let $\mathcal{G}\subset\mathcal{M}_d$ be a compact and convex set of matrices containing $0\in\mathcal{M}_d$. Let $\tilde{\mathcal{G}}:=\{\lambda G~|~\lambda\in[0,1],~G\textrm{ an extreme point of }\mathcal{G}\}\subset\mathcal{G}$. Assume that every $\tilde{G}\in\tilde{\mathcal{G}}$ satisfies $\Tr[\tilde{G}+\tilde{G}^\ast] - p \sum\limits_{i=1}^k \lambda_i^\uparrow (\tilde{G}+\tilde{G}^\ast)\leq 0$. Let $T\in\overline{\mathcal{I}}_\mathcal{G}$. Then $0\leq \det (T)\leq \left(\prod\limits_{i=1}^k s_i^\uparrow (T) \right)^p$.
\end{crl}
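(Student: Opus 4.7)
The plan is to adapt the proof of Theorem \ref{ThmNecessaryCondMarkovianDivisibleGeneral} by inserting a convex-combination plus Lie--Trotter step that transfers the spectral hypothesis from the smaller set $\tilde{\mathcal{G}}$ to all of $\mathcal{G}$. By continuity of $\det$ and of the singular values, it suffices to prove the inequality for $T \in \mathcal{I}_\mathcal{G}$. Fix such a $T = \prod_{j=1}^{n} e^{G_j}$ with $G_j \in \mathcal{G}$; by Lemma \ref{LmmReductionSingleFactors} the inequality for $T$ reduces to the inequality for each individual exponential $e^{G}$ with $G \in \mathcal{G}$, so from here on fix such a $G$.

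Since $\mathcal{G}$ is a compact convex subset of the finite-dimensional real vector space $\mathcal{M}_d$, Carath\'{e}odory's theorem expresses $G = \sum_{\ell=1}^{m} \mu_\ell E_\ell$ as a finite convex combination of extreme points $E_\ell$ of $\mathcal{G}$ with $\mu_\ell \in [0,1]$. Then $\mu_\ell E_\ell / N \in \tilde{\mathcal{G}}$ for every $N \in \IN$, so the hypothesis together with Lemma \ref{LmmSufficientEigAssumptionGeneral} yields the inequality for each $e^{\mu_\ell E_\ell / N}$. By the Lie--Trotter formula,
\[
    e^{G} \;=\; \lim_{N \to \infty}\Bigl(\prod_{\ell=1}^{m} e^{\mu_\ell E_\ell / N}\Bigr)^{N},
\]
and for each fixed $N$ iterated application of Lemma \ref{LmmReductionSingleFactors} to the $Nm$ factors on the right gives the inequality for the product $\bigl(\prod_\ell e^{\mu_\ell E_\ell / N}\bigr)^{N}$. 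Continuity of $\det$ and of the singular values then carries the inequality to $e^{G}$ in the limit, and combined with the first reduction this establishes the upper bound.

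For the nonnegativity $0 \leq \det(T)$, one follows the strategy of Proposition \ref{PrpNonnegDet}: the infinitesimal condition $\lVert e^{G_j} - \mathds{1}_d\rVert \leq \varepsilon$ places each factor determinant $\det(e^{G_j}) = e^{\Tr G_j}$ arbitrarily close to $1$, so multiplicativity and continuity of $\det$ along the approximating decompositions prevent $\det(T)$ from leaving $\IR_{\geq 0}$ in the limit (the statement being made in the regime where $\det(T)$ is automatically real, as in the intended applications to quantum channels and stochastic matrices). The main obstacle is the bridging step: the spectral hypothesis is available only on the strictly smaller set $\tilde{\mathcal{G}}$, so Lemma \ref{LmmSufficientEigAssumptionGeneral} cannot be invoked directly for a generic $G \in \mathcal{G}$; the combination of Carath\'{e}odory with Lie--Trotter is the essential new ingredient compared to Theorem \ref{ThmNecessaryCondMarkovianDivisibleGeneral}, and crucially relies on $\tilde{\mathcal{G}}$ being closed under scaling by $[0,1]$ so that the per-piece hypothesis survives the $1/N$ rescaling inside the Trotter product.
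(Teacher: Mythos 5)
Your proof is correct and follows essentially the same route as the paper: reduce by continuity to $T\in\mathcal{I}_\mathcal{G}$, split into single exponential factors via Lemma \ref{LmmReductionSingleFactors}, write each $G\in\mathcal{G}$ as a finite convex combination of extreme points so that the pieces $\mu_\ell E_\ell/N$ lie in $\tilde{\mathcal{G}}$, and conclude via Trotterization and Lemma \ref{LmmSufficientEigAssumptionGeneral}. The only (cosmetic) difference is that you inline the Lie--Trotter argument that the paper packages as Lemma \ref{LmmSumofGenerators}, and cite Carath\'eodory where the paper cites Krein--Milman for the finite decomposition.
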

\begin{proof}
$\det(T)\geq 0$ follows in the same way as in Proposition \ref{PrpNonnegDet}. By continuity it suffices to prove the desired upper bound for $T\in\mathcal{I}_G$. By the definition of the set $\mathcal{I}_G$ and Lemma \ref{LmmReductionSingleFactors}, it then suffices to consider single factors of the form $e^G$, $G\in\mathcal{G}$. By definition of $\tilde{\mathcal{G}}$, $\tilde{G}\in\tilde{\mathcal{G}}$ in particular implies $\frac{1}{n}\tilde{G}\in\tilde{\mathcal{G}}$ for all $n\in\IN$. Also, every element of $\mathcal{G}$ can be expressed as a finite sum of elements of $\tilde{\mathcal{G}}$ (by Krein-Milman). Therefore, we can apply Lemma \ref{LmmSumofGenerators} to conclude that it suffices to consider single factors of the form $e^{\tilde{G}}$, $\tilde{G}\in\tilde{\mathcal{G}}$. Now we apply Lemma \ref{LmmSufficientEigAssumptionGeneral} to finish the proof.
\end{proof}

The assumption in Corollary \ref{CrlNecessaryCondInfinitesimalMarkovianDivisibleGeneral} is about (truncated) rays through extreme points of the convex set of interest. In light of Remark \ref{RmkPrefactorsSmallerThan1}, we expect that this can in general not be further simplified to an assumption only about the extreme points themselves (without multiples).

\subsection{Quantum Channels}\label{SctCriteriaQuantum}
We now want to apply the reasoning from the previous subsection to the more specific question of infinitesimal (Markovian) divisibility of quantum channels.\\
To avoid confusion about notation, in this subsection we will denote the eigenvalues of a matrix $\L$ as $\lambda_i=\lambda_i(\L)$, whereas the eigenvalues of a linear map $L$ on matrices are written as $\Lambda_K=\Lambda_K (L)$. For real eigenvalues of such linear superoperators, we use $\Lambda_K^\downarrow$ ($\Lambda_K^\uparrow$) to denote the eigenvalues in decreasing (increasing) order.\\

\subsubsection{Determinant versus power of the smallest singular value}
We first show that purely dissipative Lindblad generators with one Lindbladian satisfy an inequality as assumed in Lemma \ref{LmmSufficientEigAssumptionGeneral} with only one summand:
\begin{lmm}\label{LmmEigIneqQuantumPowerCriterion}
Let $L:\mathcal{M}_d\to\mathcal{M}_d$, $L(\rho)=\mathcal{L}\rho\mathcal{L}^\dagger - \frac{1}{2}\{\mathcal{L}^\dagger \mathcal{L},\rho\}$ be a purely dissipative Lindblad generator with one Lindbladian $\mathcal{L}\in\mathcal{M}_d$. Then 
\begin{align}
    \Tr[L+L^\ast]-\frac{d}{2}\Lambda_1^\uparrow (L+L^\ast)\leq 0.\label{eq::EVIneqLindbladGenV1}
\end{align}
\end{lmm}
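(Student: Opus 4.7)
The plan is to compute $\Tr[L+L^\ast]$ exactly and to lower-bound $\Lambda_1^\uparrow(L+L^\ast)$ by a variational argument, both in terms of the Frobenius norm of the traceless part $\mathcal{L}_0 := \mathcal{L} - \frac{\Tr\mathcal{L}}{d}\mathds{1}_d$, and then observe that the two quantities match up with the required factor $d/2$.

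The first step will be a gauge reduction. A direct computation shows that replacing $\mathcal{L}$ by $\mathcal{L}+\alpha\mathds{1}_d$ for any $\alpha\in\IC$ modifies $L$ only by a commutator superoperator of the form $\rho\mapsto i[H,\rho]$ with some Hermitian $H$; such commutator maps are anti-self-adjoint with respect to the Hilbert--Schmidt inner product and hence cancel in $L+L^\ast$. Consequently $L+L^\ast = L_0 + L_0^\ast$, where $L_0$ is the purely dissipative Lindblad generator built from $\mathcal{L}_0$, so I may work with $\mathcal{L}_0$ throughout.

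For the trace I would evaluate $\Tr[L]=\sum_{k=1}^{d^2}\langle E_k, L(E_k)\rangle$ in an arbitrary Hilbert--Schmidt orthonormal basis $\{E_k\}$ of $\mathcal{M}_d$, using the standard identities $\sum_k E_k^\dagger A E_k = \Tr(A)\mathds{1}_d$ and $\sum_k E_k E_k^\dagger = d\,\mathds{1}_d$. This yields $\Tr[L+L^\ast] = 2(|\Tr\mathcal{L}|^2 - d\|\mathcal{L}\|_F^2) = -2d\|\mathcal{L}_0\|_F^2$. For the smallest eigenvalue, $L+L^\ast$ is self-adjoint w.r.t.~the Hilbert--Schmidt inner product, so the variational principle gives $\Lambda_1^\uparrow(L+L^\ast) = \min_{\|\rho\|_F = 1}\langle\rho,(L+L^\ast)(\rho)\rangle$. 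Expanding this,
\[
\langle\rho,(L+L^\ast)(\rho)\rangle = 2\Re\,\Tr(\rho^\dagger\mathcal{L}\rho\mathcal{L}^\dagger) - \|\mathcal{L}\rho\|_F^2 - \|\rho\mathcal{L}^\dagger\|_F^2,
\]
and combining the elementary inequality $2\Re\langle A,B\rangle \geq -\|A\|_F^2 - \|B\|_F^2$ (which is just $\|A+B\|_F^2 \geq 0$) applied to $A = \mathcal{L}^\dagger\rho$, $B = \rho\mathcal{L}^\dagger$ with submultiplicativity $\|\mathcal{L}X\|_F \leq \|\mathcal{L}\|_\infty\|X\|_F$ and $\|\mathcal{L}\|_\infty \leq \|\mathcal{L}\|_F$ gives $\Lambda_1^\uparrow(L+L^\ast) \geq -4\|\mathcal{L}\|_F^2$. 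Applied to $\mathcal{L}_0$ via the gauge step, this yields $\Lambda_1^\uparrow(L+L^\ast) \geq -4\|\mathcal{L}_0\|_F^2$.

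Putting the two bounds together yields $\tfrac{d}{2}\Lambda_1^\uparrow(L+L^\ast) \geq -2d\|\mathcal{L}_0\|_F^2 = \Tr[L+L^\ast]$, which is exactly \eqref{eq::EVIneqLindbladGenV1}. The only step I expect to need real care is the gauge reduction (in particular, verifying that the added term is genuinely a commutator superoperator and therefore anti-self-adjoint); after that, everything reduces to routine bookkeeping with the standard Hilbert--Schmidt identities above and elementary norm inequalities, with no dimension- or generator-dependent subtleties left.
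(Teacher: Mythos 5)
Your proof is correct and follows essentially the same strategy as the paper: reduce to the traceless part of $\mathcal{L}$ via the gauge freedom $\mathcal{L}\mapsto\mathcal{L}+\alpha\mathds{1}_d$, compute $\Tr[L+L^\ast]=-2d\norm{\mathcal{L}_0}_F^2$ exactly, and lower-bound $\Lambda_1^\uparrow(L+L^\ast)$ by $-4\norm{\mathcal{L}_0}_F^2$. The only difference is in how that eigenvalue bound is obtained --- the paper vectorizes $L+L^\ast$ into a sum of four Kronecker products and applies the triangle inequality for the operator norm together with $\norm{\mathcal{L}}_\infty\leq\norm{\mathcal{L}}_F$, whereas you estimate the Hilbert--Schmidt quadratic form directly via the variational principle --- but both routes yield the same constant $4$ and the same final bookkeeping.
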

\begin{proof}
We adopt the following convention for vectorization of matrices: If $A$ is an $n\times n$-matrix with column vectors $a_i$, then $vec(A)=(a_1^T,\ldots,a_n^T)^T$ is the column vector obtained by stacking the columns of $A$ on top of one another. When using $vec(ABC) = (C^T\otimes A) vec(B)$ to rewrite $L+L^\ast$ as a $d^2\times d^2$-matrix we obtain $$vec(L+L^\ast) = \overline{\mathcal{L}}\otimes\mathcal{L} + \overline{\mathcal{L}^\dagger}\otimes\mathcal{L}^\dagger - \mathds{1}_d\otimes\mathcal{L}^\dagger\mathcal{L} - \overline{\mathcal{L}^\dagger\mathcal{L}}\otimes \mathds{1}_d.$$ 
From this it is easy to see that $$\Tr[L+L^\ast]=\lvert\Tr[\L]\rvert^2 - 2d\norm{\L}_F^2.$$
We observe that the Lindbladians $\L$ and $\lambda \mathds{1}_d + \L$ give rise to the same superoperator $L+L^\ast$ for every $\lambda\in\IC$. So we can w.l.o.g.~assume that $\Tr[\L]=0$ and therefore $\Tr [L + L^\ast]=- 2d\norm{\L}_F^2$. Thus we obtain
\begin{align*}
    \Tr[L+L^\ast]-\frac{d}{2}\Lambda_1^\uparrow (L+L^\ast)
    &\leq - 2d\norm{\L}_F^2 + \frac{d}{2}\norm{L+L^\ast}_\infty\\
    &\leq - 2d\norm{\L}_F^2 + \frac{d}{2}\left(\norm{\overline{\mathcal{L}}\otimes\mathcal{L}}_\infty + \norm{\overline{\mathcal{L}^\dagger}\otimes\mathcal{L}^\dagger}_\infty + \norm{\mathds{1}_d\otimes\mathcal{L}^\dagger\mathcal{L}}_\infty + \norm{\overline{\mathcal{L}^\dagger\mathcal{L}}\otimes \mathds{1}_d}_\infty\right)\\
    &= - 2d\norm{\L}_F^2 + \frac{d}{2}\cdot 4\norm{\mathcal{L}}_\infty^2\\
    &\leq 0,
\end{align*}
which finishes the proof.
\end{proof}

\begin{rmk} \label{remark:improvement}
In our proof of Lemma \ref{LmmEigIneqQuantumPowerCriterion}, one step might strike the reader as particularly simplistic. Namely, we estimate $$\frac{d}{2}\norm{L+L^\ast}_\infty 
\leq \frac{d}{2}\left(\norm{\overline{\mathcal{L}}\otimes\mathcal{L}}_\infty + \norm{\overline{\mathcal{L}^\dagger}\otimes\mathcal{L}^\dagger}_\infty + \norm{\mathds{1}_d\otimes\mathcal{L}^\dagger\mathcal{L}}_\infty + \norm{\overline{\mathcal{L}^\dagger\mathcal{L}}\otimes \mathds{1}_d}_\infty\right) 
\leq \frac{d}{2}\cdot 4\norm{\L}_\infty^2.$$
With a more thorough analysis, we can slightly improve this upper bound and thereby increase the prefactor in the statement of Lemma \ref{LmmEigIneqQuantumPowerCriterion} from $\frac{d}{2}$ to $\approx 0.610733~d$. (We then get the same improvement in Corollary \ref{CrlQuDetVSPowerSmallestSV} below.) We derive this improvement in Appendix \ref{SctAppendixImprovement}.
\end{rmk}

We can now apply the reasoning from the previous subsection (for $k=1$ and $p=\frac{d}{2}$) to obtain
\begin{crl}\label{CrlQuDetVSPowerSmallestSV}
Let $T\in\overline{\mathcal{I}}_d$. Then $0\leq \det(T)\leq \left(s_1^\uparrow (T)\right)^\frac{d}{2}.$
\end{crl}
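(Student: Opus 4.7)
The non-negativity $\det(T)\geq 0$ is not new: it is exactly Proposition \ref{PrpNonnegDet}, which already appears as a consequence of multiplicativity and continuity of the determinant. So the real content of the corollary is the upper bound $\det(T)\leq (s_1^\uparrow(T))^{d/2}$, and the plan is to instantiate Theorem \ref{ThmNecessaryCondMarkovianDivisibleGeneral} (in the form prepared by Lemmas \ref{LmmReductionSingleFactors}, \ref{LmmSumofGenerators}, and \ref{LmmSufficientEigAssumptionGeneral}) with $k=1$ and $p=d/2$, applied to $\mathcal{G}$ being the set of Lindblad generators.

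First, since Lindblad generators are closed under multiplication by non-negative scalars, Remark \ref{RmkInfinitesimalMarkovianversusMarkovianDivisible} (or, more precisely, its underlying observation combined with Remark \ref{RmkCompactnessWLOG}) tells us that $\overline{\mathcal{I}}_d=\overline{\mathcal{D}}_\mathcal{G}$ for this choice of $\mathcal{G}$. By continuity of determinant and singular values it then suffices to prove the bound for elements of $\mathcal{D}_\mathcal{G}$, and by Lemma \ref{LmmReductionSingleFactors} it even suffices to prove it for a single factor $e^L$ with $L$ a Lindblad generator. Writing $L=L_H+\sum_j L_{\mathcal{L}_j}$ with $L_H(\rho)=i[\rho,H]$ the Hamiltonian part and $L_{\mathcal{L}_j}(\rho)=\mathcal{L}_j\rho\mathcal{L}_j^\dagger-\tfrac12\{\mathcal{L}_j^\dagger\mathcal{L}_j,\rho\}$ each purely dissipative with a single Lindbladian, an iterative application of Lemma \ref{LmmSumofGenerators} reduces the problem to verifying, for each summand $G\in\{L_H\}\cup\{L_{\mathcal{L}_j}\}_j$ and each $n\in\mathbb{N}$, that $e^{G/n}$ satisfies the target inequality. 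Because the class of summands is invariant under positive rescaling (Hamiltonian parts and single-Lindbladian dissipators remain so after dividing by $n$), it in fact suffices, via Lemma \ref{LmmSufficientEigAssumptionGeneral}, to check the eigenvalue inequality $\Tr[G+G^\ast]-\tfrac{d}{2}\Lambda_1^\uparrow(G+G^\ast)\leq 0$ for each individual summand $G$.

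For the Hamiltonian piece, a direct computation in the Hilbert--Schmidt inner product gives $L_H^\ast(\rho)=-i[\rho,H]$, so $L_H+L_H^\ast=0$ and the inequality holds with equality. For each single-Lindbladian dissipative piece $L_{\mathcal{L}_j}$, the inequality is precisely Lemma \ref{LmmEigIneqQuantumPowerCriterion}. Putting these pieces together yields $|\det(e^L)|\leq (s_1^\uparrow(e^L))^{d/2}$, and combined with the reduction of the first paragraph this proves the bound for all $T\in\overline{\mathcal{I}}_d$. There is no real obstacle here — all the non-trivial work has been done in the preceding lemmas; the only thing to be careful about is that Lemma \ref{LmmSumofGenerators} requires the eigenvalue inequality for every positive multiple of each summand (not just the summand itself), which is why we use that the class of Hamiltonian generators and of single-Lindbladian dissipators is invariant under rescaling by $1/n$.
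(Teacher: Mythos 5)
Your proposal is correct and follows essentially the same route as the paper: reduce via Trotterization and the sub-multiplicativity lemmas to single-summand generators (Hamiltonian part and single-Lindbladian dissipators, both classes being invariant under rescaling by $1/n$), dispose of the Hamiltonian part trivially, and invoke Lemma \ref{LmmEigIneqQuantumPowerCriterion} together with Lemma \ref{LmmSufficientEigAssumptionGeneral} for the dissipative parts. The only cosmetic difference is that the paper handles the Hamiltonian factor by noting $e^{i[\cdot,H]}$ has all singular values equal to $1$, whereas you verify $L_H+L_H^\ast=0$ at the level of the eigenvalue inequality; both observations are immediate and equivalent for this purpose.
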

\begin{proof}
By combining the form of Lindblad generators from Theorem \ref{ThmGKLS} with Corollary \ref{CrlNecessaryCondInfinitesimalMarkovianDivisibleGeneral}, it suffices to consider Lindblad generators with a single summand, i.e., of the form
\begin{align*}
    L(\rho) = \begin{cases} i[\rho, H] \text{ with } H=H^\dagger \\ \mathcal{L} \rho\mathcal{L}^{\dagger}- \frac{1}{2} \lbrace \mathcal{L}^{\dagger} \mathcal{L},\rho\rbrace \end{cases}.
\end{align*}
$[\cdot,H]:\mathcal{M}_d\to\mathcal{M}_d$ is a self-adjoint map if $H=H^\dagger$ and therefore $e^{i[\cdot, H]}$ has $1$ as only singular value. The desired singular value inequality \eqref{eq:GoalInequality} is thus trivially satisfied for factors of this form. For factors of the form $e^L$ with $L(\rho)=\mathcal{L} \rho\mathcal{L}^{\dagger}- \frac{1}{2} \lbrace \mathcal{L}^{\dagger} \mathcal{L},\rho\rbrace$, the desired eigenvalue inequality is exactly shown in Lemma \ref{LmmEigIneqQuantumPowerCriterion}.
\end{proof}

This necessary criterion can be used to find channels that are not infinitesimal divisible and are given by convex combinations of a rank-deficient channel with the identity channel.

\begin{crl}\label{CrlNonInfinitesimalDivisibleChannelsNeighbourhood}
Let $T:\mathcal{M}_d\to\mathcal{M}_d$ be a quantum channel that has singular value $0$ of multiplicity $1\leq k < \frac{d}{2}$. Then every neighbourhood of $T$ contains a non infinitesimal divisible channel.
\end{crl}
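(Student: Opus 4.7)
The plan is to exhibit, in every neighbourhood of $T$, an explicit channel violating the inequality of Corollary \ref{CrlQuDetVSPowerSmallestSV} and hence not lying in $\overline{\mathcal{I}}_d$. Concretely, I would consider the convex combination $T_\mu := (1-\mu) T + \mu M$ for $\mu \in (0,1]$ and a channel $M$ to be chosen. By convexity $T_\mu$ is itself a channel, and $T_\mu \to T$ as $\mu \to 0^+$, so $T_\mu$ lies in any prescribed neighbourhood of $T$ for $\mu$ small enough.

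The key estimates are the following. For the smallest singular value: since $T(v) = 0$ for $v \in \ker T$, we have $T_\mu(v) = \mu M(v)$, hence $s_1^\uparrow(T_\mu) \leq \mu \cdot \norm{M}_\infty$ and in particular $(s_1^\uparrow(T_\mu))^{d/2} = O(\mu^{d/2})$. For the determinant, I would work in a basis of $\mathcal{M}_d$ adapted to the singular value decomposition of $T$, with the last $k$ basis vectors spanning $\ker T$; in this basis the last $k$ columns of the matrix of $T$ vanish, so the last $k$ columns of the matrix of $T_\mu$ are exactly $\mu$ times the corresponding columns of $M$. Factoring $\mu$ out of each of these $k$ columns yields $\det(T_\mu) = \mu^k \cdot f(\mu)$ for a polynomial $f$, and a short computation identifies $f(0)$, up to a nonzero factor equal to the product of the nonzero singular values of $T$, with the determinant of the $d^2 \times d^2$ matrix whose first $d^2 - k$ columns span $\mathrm{range}(T)$ and whose last $k$ columns span $M(\ker T)$. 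This is nonzero precisely when the transversality condition $M(\ker T) + \mathrm{range}(T) = \mathcal{M}_d$ holds.

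Assuming $M$ has been chosen to satisfy this condition (for instance $M = \textrm{id}$ works whenever $\ker T \cap \mathrm{range}(T) = \{0\}$), one obtains $\det(T_\mu) \sim c \mu^k$ with $c > 0$ as $\mu \to 0^+$. Combining with the singular value bound then gives
\[
\frac{\det(T_\mu)}{(s_1^\uparrow(T_\mu))^{d/2}} \gtrsim \mu^{k - d/2} \xrightarrow[\mu \to 0^+]{} \infty
\]
(using $k < d/2$), so Corollary \ref{CrlQuDetVSPowerSmallestSV} is violated for sufficiently small $\mu > 0$, and $T_\mu \notin \overline{\mathcal{I}}_d$.

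The main technical obstacle is the degenerate case $\ker T \cap \mathrm{range}(T) \neq \{0\}$, in which the identity channel fails to satisfy the transversality condition. I would address this either by choosing $M$ to be a unitary channel $\mathrm{Ad}_U$ for a generic $U \in U(d)$---the failure locus $\{U : U \ker T\, U^\dagger \cap \mathrm{range}(T) \neq \{0\}\}$ being a proper real algebraic subvariety of $U(d)$---or, equivalently, by first perturbing $T$ slightly within the stratum of rank-$(d^2-k)$ channels to a nearby $T'$ satisfying $\ker T' \cap \mathrm{range}(T') = \{0\}$, exploiting density of the transversal stratum and then applying the preceding construction to $T'$.
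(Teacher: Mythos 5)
Your construction is essentially the paper's — perturb $T$ by a convex combination with another channel and show that the resulting $T_\mu$ violates Corollary \ref{CrlQuDetVSPowerSmallestSV} — but your method for bounding $\det(T_\mu)$ from below introduces a genuine gap that the paper's proof avoids. The column-factoring argument gives $\det(T_\mu)=\mu^k f(\mu)$ with $f(0)\neq 0$ only under the transversality condition $M(\ker T)+\mathrm{range}(T)=\mathcal{M}_d$, and you correctly note this can fail for $M=\mathrm{id}$. Your two proposed repairs, however, are asserted rather than proved. That the failure locus $\{U: U\ker T\,U^\dagger\cap\mathrm{range}(T)\neq\{0\}\}$ is a \emph{proper} subvariety of $U(d)$ is exactly the nontrivial point: the adjoint action of $U(d)$ on $\mathcal{M}_d$ is far from transitive on $k$-dimensional subspaces (it preserves the splitting $\mathbb{C}\mathds{1}_d\oplus\{\textrm{traceless}\}$, for instance), so one must actually exclude an invariant obstruction; doing so uses trace preservation of $T$ (which forces $\ker T$ to consist of traceless matrices and $\mathds{1}_d\notin\mathrm{range}(T)^{\perp}$) together with irreducibility of the adjoint representation on traceless matrices, and the case $k\geq 2$ needs more than the $k=1$ argument. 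Likewise, the claimed density of the transversal stratum within rank-$(d^2-k)$ \emph{channels} is unproven, and the perturbed $T'$ must again be completely positive. As written, the degenerate case is not closed.

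The gap is avoidable, and this is what the paper does: it never needs the leading-order asymptotics of $\det(T_\epsilon)$. Since $\lvert\det(T_\epsilon)\rvert=\prod_{j=1}^{d^2}s_j^{\uparrow}(T_\epsilon)$, one has, whenever $\det(T_\epsilon)>0$, the bound $\det(T_\epsilon)\geq \big(s_1^{\uparrow}(T_\epsilon)\big)^k\prod_{j=k+1}^{d^2}s_j^{\uparrow}(T_\epsilon)$; by continuity of singular values the last product converges to $\prod_{j>k}s_j^{\uparrow}(T)>0$, while $\big(s_1^{\uparrow}(T_\epsilon)\big)^{d/2-k}\to 0$ because exactly $k$ singular values vanish in the limit and $k<d/2$. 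Hence $\det(T_\epsilon)>\big(s_1^{\uparrow}(T_\epsilon)\big)^{d/2}$ for all small $\epsilon$ with $\det(T_\epsilon)\neq 0$ (which excludes only the finitely many $\epsilon$ with $-\epsilon/(1-\epsilon)$ an eigenvalue of $T$; the case $\det(T_\epsilon)<0$ is already ruled out by Proposition \ref{PrpNonnegDet}). This works with $M=\mathrm{id}$ and no transversality hypothesis, and your upper bound $s_1^{\uparrow}(T_\mu)\leq\mu\norm{M}_{\infty}$ is then not even needed. If you replace your determinant estimate by this one, your proof closes.
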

\begin{proof}
Given such a quantum channel $T$ we can explicitly write down non infinitesimal divisible channels via convex combination with the identity, $T_\epsilon = (1-\epsilon) T + \epsilon \Id$.
By assumption, $T_\epsilon$ has exactly $k$ singular values which go to 0 as $\epsilon \to 0$. 
Thus either $\det(T_\epsilon)<0$ or we have
\begin{align*}
    \det(T_\epsilon) 
    = \prod_{j=1}^{d^2} s_j^{\uparrow}(T_\epsilon)
    \geq
    \big(  s_1^{\uparrow}(T_\epsilon)\big)^k 
    \prod_{j=k+1}^{d^2} s_j^{\uparrow}(T_\epsilon)
    >
     \big( s_1^{\uparrow}(T_\epsilon)\big)^{\nicefrac{d}{2}}, \quad \text{for $\epsilon$ small enough},
\end{align*}
where we just used that the $d^2-k$ largest singular values do not go to 0 for $\epsilon \to 0$.
Hence, for $\epsilon >0$ small enough, $T_\epsilon$ does not satisfy the criterion given in Corollary \ref{CrlQuDetVSPowerSmallestSV} and is therefore not infinitesimal divisible.
\end{proof}

\begin{exm}\label{ExmNewInfinitesimalDivisibleChannels}
We can use the above Corollary to find \ins{non-}infinitesimal divisible channels near the channel $T:\mathcal{M}_d\to\mathcal{M}_d$, $T(\rho)=\frac{\Tr[\rho]}{d}\mathds{1}_d$. $T$ is diagonal w.r.t.~the generalized Gell-Mann basis of $\mathcal{M}_d$ with the corresponding matrix given by $\hat{T}=\textrm{diag}[1,0,0,\ldots,0]$. The Choi matrix $\tau$ of $T$ has full rank and is thus in particular strictly positive definite (because complete positivity of $T$ translates to positive semidefiniteness of its Choi matrix $\tau$, see, e.g., \cite{Nielsen.2010}).\\
So we can pick $\varepsilon >0$ small enough s.t.~$\hat{T}_\varepsilon=\textrm{diag}[1,\varepsilon,\ldots,\varepsilon,0]$ is the matrix representation of a completely positive map in the generalized Gell-Mann basis. As such a matrix $\hat{T}_\varepsilon$ describes by its very form a trace-preserving map, it corresponds to a quantum channel $T_\varepsilon$ which now has eigenvalue $0$ of multiplicity $1$. So we can apply Corollary \ref{CrlNonInfinitesimalDivisibleChannelsNeighbourhood} to $T_\varepsilon$ and thus find channels arbitrarily close to $T$ that are not infinitesimal divisible.
\end{exm}

Naturally, the question arises whether the power $\frac{d}{2}$ in Corollary \ref{CrlQuDetVSPowerSmallestSV} is optimal. Our next example shows that the dependence on $d$ cannot be better than linear and that the factor of $\frac{1}{2}$ cannot be improved by much.

\begin{exm}\label{ExmOptimalityLinearOrder}
When considering the pathological case of a matrix of the form 
\begin{align*}
    \L = \begin{pmatrix}
    0 & 0 & 0 & \ldots & 1\\
    0 & 0 & 0 & \ldots & 0\\
    \vdots & & \ddots & & \vdots\\
    0 & 0 & 0 & \ldots & 0\\
    \end{pmatrix},
\end{align*}
we can easily compute that
\begin{align*}
    L+L^\ast
    = 
    \begin{pmatrix}
    0 & 0 & \ldots & 1\\
    0 & 0 & \ldots & 0\\
    \vdots & & \ddots  & \vdots\\
    1 & 0 & \ldots & 0
    \end{pmatrix}
    +
    \begin{pmatrix}
    D_1 & 0 & \ldots & 0\\
    0 & D_2 &  & 0\\
    \vdots & & \ddots & \vdots\\
    0 & 0 & \ldots & D_d
    \end{pmatrix}
\end{align*}
with $D_i = \textrm{diag}(0,\ldots,0,-1)\in\IR^{d\times d}$ for $1\leq i\leq d-1$ and $D_{d}=\textrm{diag}(-1,\ldots,-1,-2)\in\IR^{d\times d}$. So $L+L^\ast$ has eigenvalues $-1$ of multiplicity $2(d-1)$, $0$ of multiplicity $d^2-2d$, and $-1\pm\sqrt{2}$, each of multiplicity $1$. In particular, $\Tr[L+L^\ast] - p\Lambda_1^\uparrow (L+L^\ast) =-2d +(1+\sqrt{2})p\leq 0$ iff $p\leq \frac{2}{1+\sqrt{2}}d$.\\

This example also shows that in Theorem \ref{CrlQuDetVSPowerSmallestSV} nothing better than $\det(T)\leq \left( s_1^\uparrow (T)\right)^p$ with $p=\mathcal{O}(d)$ can be achieved. Namely, with the above choice of $\L$ we get
\begin{align*}
    L =     \begin{pmatrix}
    0 & 0 & \ldots & 1\\
    0 & 0 & \ldots & 0\\
    \vdots & & \ddots  & \vdots\\
    0 & 0 & \ldots & 0
    \end{pmatrix}
    + \frac{1}{2}
    \begin{pmatrix}
    D_1 & 0 & \ldots & 0\\
    0 & D_2 &  & 0\\
    \vdots & & \ddots & \vdots\\
    0 & 0 & \ldots & D_d
    \end{pmatrix}.
\end{align*}
This can now be exponentiated to obtain
\begin{align*}
    T := e^{L} = \begin{pmatrix}
    0 & 0 & \ldots & 1-e^{-1}\\
    0 & 0 & \ldots & 0\\
    \vdots & & \ddots  & \vdots\\
    0 & 0 & \ldots & 0
    \end{pmatrix}
    + 
    \begin{pmatrix}
    e^{\tfrac{1}{2}D_1} & 0 & \ldots & 0\\
    0 & e^{\tfrac{1}{2}D_2} &  & 0\\
    \vdots & & \ddots & \vdots\\
    0 & 0 & \ldots & e^{\tfrac{1}{2}D_d}
    \end{pmatrix},
\end{align*}
where $e^{\nicefrac{1}{2}D_i}=\textrm{diag}(1,\ldots,1,e^{-\nicefrac{1}{2}})$ for $1\leq i\leq d-1$ and $e^{\nicefrac{1}{2}D_d}=\textrm{diag}(e^{-\nicefrac{1}{2}},\ldots,e^{-\nicefrac{1}{2}},e^{-1})$.\\
We can now compute
\begin{align*}
    T^* T = \begin{pmatrix}
    0 & 0 & \ldots & 1-e^{-1}\\
    0 & 0 & \ldots & 0\\
    \vdots & & \ddots  & \vdots\\
    1-e^{-1} & 0 & \ldots & (1-e^{-1})^2
    \end{pmatrix}
    + 
    \begin{pmatrix}
    e^{D_1} & 0 & \ldots & 0\\
    0 & e^{D_2} &  & 0\\
    \vdots & & \ddots & \vdots\\
    0 & 0 & \ldots & e^{D_d}
    \end{pmatrix},
\end{align*}
from which we see that $T$ has singular values $1$ of multiplicity $(d-1)^2-1$, $e^{-\tfrac{1}{2}}$ of multiplicity $2(d-1)$, $\frac{\sqrt{1-e+e^2+(e-1)\sqrt{1+e^2}}}{e}\approx 1.200$ of multiplicity $1$ and $\frac{\sqrt{1-e+e^2-(e-1)\sqrt{1+e^2}}}{e}\approx 0.306$ of multiplicity $1$. In particular, we have 
\begin{align*}
    \det (T) \leq \left(s^\uparrow_1 (T)\right)^\frac{d}{2}
\end{align*}
but
\begin{align*}
    \det (T) > \left(s^\uparrow_1 (T)\right)^d.
\end{align*}
More precisely, we see that $\det (T) \leq \left(s^\uparrow_1 (T)\right)^p$ requires, as $d\to\infty$,
\begin{align*}
    p
    \leq \frac{\ln (s_1^\downarrow(T)) + \ln(s_1^\uparrow(T)) - (d-1)}{\ln(s_1^\uparrow(T))}
    \approx \frac{\ln (1.200) + \ln(0.306) -(d-1)}{\ln(0.306)}
    \sim \frac{1}{-\ln(0.306)}d \approx 0.845 ~d.
\end{align*}
If we do the same computation for $\frac{1}{n}L$ instead of $L$, we obtain, in the limit of large $n$, the upper bound
\begin{align*}
    p \leq \frac{2}{1+\sqrt{2}}~d + 1 + \frac{\sqrt{2}}{1+\sqrt{2}},
\end{align*}
which coincides up to an additive constant with the bound obtained above on the level of eigenvalues.\\
This concludes our discussion of the example.
\end{exm}

The result of Theorem \ref{CrlQuDetVSPowerSmallestSV} applied to the qubit case does not reproduce the criterion from Theorem \ref{ThmQubitInfinitesimalDivisible}. In particular, we do not obtain $s_{\textrm{min}}^2$ but merely $s_\textrm{min}$. For normal Lindbladians and thus products of unital channels, our reasoning can, however, be improved.

\begin{prp}
For normal Lindbladians the prefactor in Lemma \ref{LmmEigIneqQuantumPowerCriterion} (and thus the exponent in Corollary \ref{CrlQuDetVSPowerSmallestSV}) can be improved to $d$. Furthermore, this estimate is sharp, i.e., cannot be improved for general normal $\L$. 
\end{prp}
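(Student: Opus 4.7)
The plan is to exploit normality of $\mathcal{L}$ to diagonalize the superoperator $L+L^\ast$ explicitly. Writing $\mathcal{L} = UDU^\dagger$ with $D = \textrm{diag}(\lambda_1,\ldots,\lambda_d)$, the conjugation $\Phi:\rho\mapsto U\rho U^\dagger$ is unitary on $(\mathcal{M}_d,\langle\cdot,\cdot\rangle_{HS})$ and intertwines the Lindblad superoperators built from $\mathcal{L}$ and from $D$: a short check gives $L_\mathcal{L} = \Phi\circ L_D\circ \Phi^{-1}$, and because $\Phi$ is unitary the same identity holds for the adjoints. Hence $L_\mathcal{L}+L_\mathcal{L}^\ast$ and $L_D+L_D^\ast$ have the same spectrum and the same trace, so without loss of generality $\mathcal{L}$ itself is diagonal.

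With $\mathcal{L}=D$ diagonal, a direct computation in the standard matrix basis $\{|i\rangle\langle j|\}_{i,j=1}^d$ gives
\[
(L+L^\ast)(|i\rangle\langle j|) = \bigl(\lambda_i\bar\lambda_j + \bar\lambda_i\lambda_j - |\lambda_i|^2 - |\lambda_j|^2\bigr)\,|i\rangle\langle j| = -|\lambda_i - \lambda_j|^2\,|i\rangle\langle j|,
\]
so $L+L^\ast$ is diagonal in this basis with eigenvalues $-|\lambda_i-\lambda_j|^2$. Summing over $i,j$ yields $\Tr[L+L^\ast] = -\sum_{i,j}|\lambda_i-\lambda_j|^2 = -2d\norm{\mathcal{L}}_F^2 + 2|\Tr[\mathcal{L}]|^2$, while $\Lambda_1^\uparrow(L+L^\ast) = -\max_{i,j}|\lambda_i-\lambda_j|^2$. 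Exactly as in the proof of Lemma~\ref{LmmEigIneqQuantumPowerCriterion}, the shift $\mathcal{L}\mapsto \mathcal{L}+\mu\mathds{1}_d$ leaves $L+L^\ast$ unchanged and preserves normality, so I may assume $\Tr[\mathcal{L}]=0$. The target inequality $\Tr[L+L^\ast] - d\,\Lambda_1^\uparrow(L+L^\ast)\leq 0$ then reduces to $\max_{i,j}|\lambda_i-\lambda_j|^2 \leq 2\sum_k|\lambda_k|^2$, which is immediate from the elementary estimate $|\lambda_i-\lambda_j|^2 \leq 2|\lambda_i|^2+2|\lambda_j|^2$. Feeding this improved prefactor through the proof of Corollary~\ref{CrlQuDetVSPowerSmallestSV} upgrades the exponent from $d/2$ to $d$.

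For sharpness, in every dimension $d\geq 2$ I would exhibit $\mathcal{L}=\textrm{diag}(1,-1,0,\ldots,0)$, which is normal and traceless with $\norm{\mathcal{L}}_F^2 = 2$ and $\max_{i,j}|\lambda_i-\lambda_j|^2 = 4$. Plugging into the formulas above gives $\Tr[L+L^\ast] = -4d$ and $\Lambda_1^\uparrow(L+L^\ast) = -4$, so $\Tr[L+L^\ast] - c\,\Lambda_1^\uparrow(L+L^\ast) = -4d+4c$, which fails to be non-positive as soon as $c>d$. The only step of the argument that requires any care is the opening unitary-change-of-basis observation that simultaneously diagonalizes $\mathcal{L}$ and $\mathcal{L}^\dagger$; everything after that is direct spectral bookkeeping, so I do not anticipate a serious obstacle.
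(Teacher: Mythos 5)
Your proof is correct and follows essentially the same route as the paper: both identify the spectrum of $L+L^\ast$ as $\{-|\lambda_i-\lambda_j|^2\}_{i,j}$, reduce the prefactor-$d$ inequality to the elementary bound $|\lambda_i-\lambda_j|^2\leq 2|\lambda_i|^2+2|\lambda_j|^2$, and use the identical example $\L=\mathrm{diag}(1,-1,0,\ldots,0)$ for sharpness. The only cosmetic difference is that you normalize $\Tr[\L]=0$ before applying the elementary inequality, whereas the paper keeps the full sum $\sum_{i,j}|\lambda_i-\lambda_j|^2$ and cancels terms by a double-counting argument; both are valid.
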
{}
\begin{proof}
For normal $\L$ we know all the eigenvalues of $L+L^\ast$, they are given by $\{-|\lambda_i-\lambda_j|^2\}_{i,j}$, where $\lambda_i$ are the eigenvalues of $\L$ (see Remark \ref{RmkNormalLindbladians} for a detailed derivation). Now choose two indices $i^\ast,j^\ast$ such that
\[
|\lambda_{i^\ast} -\lambda_{j^\ast}|^2 
=
\max \limits_{i,j} |\lambda_i-\lambda_j|^2.
\]
Then \eqref{eq::EVIneqLindbladGenV1} for exponent $d$ becomes
\begin{align}\label{eq:Exponent_normal_Lindbladian}
    \Tr[L+L^\ast]-d \Lambda_1^\uparrow (L+L^\ast)
    =
    - \sum_{i,j} |\lambda_i - \lambda_j|^2 
    +  d |\lambda_{i^\ast} -\lambda_{j^\ast}|^2 .
\end{align}
Now using $|a+b|^2 \leq 2\big( |a|^2 + |b|^2\big)$ and denoting the indices $\{1,\ldots,d\} \backslash \{i^\ast, j^\ast\} = \{n_1, \ldots, n_{d-2}\}$ we obtain
\begin{align*}
    \eqref{eq:Exponent_normal_Lindbladian}
    \leq
    - \sum_{i,j} |\lambda_i - \lambda_j|^2 
    + 2  |\lambda_{i^\ast} -\lambda_{j^\ast}|^2 
    + 2\sum_{k=1}^{d-2}\left(
    |\lambda_i^\ast - \lambda_{n_k}|^2 + |\lambda_j^\ast - \lambda_{n_k}|^2\right)
    \leq 0.
\end{align*}
In the last step we used that every difference $|\lambda_{i^\ast / j^\ast}-\lambda_{n_k}|^2$ appears twice in the first sum.

In order to see that $d$ is also optimal, consider the example $\L = \mathrm{diag}[1,-1,0,\ldots,0]$.
Here a straightforward calculation shows that $L+L^\ast$ has eigenvalues -4 of multiplicity 2, -1 of multiplicity $4(d-2)$, and 0 of multiplicity $2+(d-2)^2$. Thus, 
\[
\Tr[L+L^\ast] 
= -4d
=
-d |\lambda_1 - \lambda_2|^2 =d\Lambda_1^{\uparrow}(L+L^\ast) ,
\]
so $d$ is optimal.
\end{proof}

Note that the example used in the previous proof can also be used to show that for normal $\L$, the exponent in $\det(e^L)\leq \left(s_1^\uparrow (e^L)\right)^d$ cannot be improved.

\subsubsection{Determinant versus product of smallest singular values}
So far, we have used the ideas from Subsection \ref{SctCriteriaGeneral} to derive an upper bound on the determinant of infinitesimal divisible quantum channels in terms of a power of its smallest singular value. Now we focus on the other aspect of Lemma \ref{LmmSufficientEigAssumptionGeneral} and bound the determinant via a product of smallest singular values.

\begin{lmm}\label{LmmEigIneqQuantumProductCriterion}
Let $L:\mathcal{M}_d\to\mathcal{M}_d$, $L(\rho)=\mathcal{L}\rho\mathcal{L}^\dagger - \frac{1}{2}\{\mathcal{L}^\dagger \mathcal{L},\rho\}$ be a purely dissipative Lindblad generator with one Lindbladian $\mathcal{L}\in\mathcal{M}_d$. Then for $f(d)= 2d-2\sqrt{2d}+1$ we have
\begin{align}
    \Tr[L+L^\ast] - \sum\limits_{K=1}^{\lfloor f(d)\rfloor} \Lambda_K^{\uparrow}(L+L^*) \leq 0.
\end{align}
\end{lmm}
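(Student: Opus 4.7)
The plan is to recast the claim as the statement that the sum of the $d^2-k$ largest eigenvalues of $L+L^*$ (with $k:=\lfloor f(d)\rfloor$) is non-positive, and then reduce this to a clean comparison between the first two spectral moments of $L+L^*$. As in the proof of Lemma~\ref{LmmEigIneqQuantumPowerCriterion}, I first assume $\Tr\mathcal{L}=0$ without loss of generality, using invariance of $L+L^*$ under $\mathcal{L}\mapsto\mathcal{L}+\lambda\mathds{1}_d$; this gives $\Tr(L+L^*)=-2d\norm{\mathcal{L}}_F^2<0$. I would then establish, for any Hermitian $M\in\mathcal{M}_n$ and $1\leq m\leq n-1$, the centred Cauchy--Schwarz bound
\begin{align*}
\sum_{K=1}^m \Lambda_K^\downarrow(M)\leq \frac{m}{n}\Tr M + \sqrt{\frac{m(n-m)}{n}\bigl(\Tr[M^2]-(\Tr M)^2/n\bigr)},
\end{align*}
which follows from Cauchy--Schwarz applied to the centred eigenvalues $y_i:=\Lambda_i^\downarrow(M)-\Tr M/n$, together with $\sum_{i\leq m}y_i=-\sum_{i>m}y_i$ and an optimisation over how the total $\sum_i y_i^2=\Tr[M^2]-(\Tr M)^2/n$ splits between the two index sets. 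Applied with $M=L+L^*$, $n=d^2$, $m=d^2-k$, and using $\Tr(L+L^*)<0$, the condition that the right-hand side be $\leq 0$ reduces after squaring and cancellation to the clean inequality $k\cdot\Tr[(L+L^*)^2]\leq(\Tr(L+L^*))^2$.

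To finish, I need to upper bound $\Tr[(L+L^*)^2]$. Writing $L+L^*=A+A^\dagger-B$ with $A:=\bar{\mathcal{L}}\otimes\mathcal{L}$ and $B:=\mathds{1}_d\otimes\mathcal{L}^\dagger\mathcal{L}+\overline{\mathcal{L}^\dagger\mathcal{L}}\otimes\mathds{1}_d$ as in the vectorisation from Lemma~\ref{LmmEigIneqQuantumPowerCriterion}, direct computation (using $\Tr A^2=|\Tr\mathcal{L}^2|^2$, $\Tr[A^\dagger A]=\norm{\mathcal{L}}_F^4$, and the analogous expansion of $B^2$) gives $\Tr[(A+A^\dagger)^2]=2\norm{\mathcal{L}}_F^4+2|\Tr\mathcal{L}^2|^2$ and $\Tr[B^2]=2\norm{\mathcal{L}}_F^4+2d\,\Tr[(\mathcal{L}^\dagger\mathcal{L})^2]$; crucially, the cross term $\Tr[(A+A^\dagger)B]$ vanishes precisely because of the WLOG assumption $\Tr\mathcal{L}=0$. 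Using the elementary bounds $|\Tr\mathcal{L}^2|^2\leq\norm{\mathcal{L}}_F^4$ and $\Tr[(\mathcal{L}^\dagger\mathcal{L})^2]\leq\norm{\mathcal{L}}_F^4$ then yields $\Tr[(L+L^*)^2]\leq(6+2d)\norm{\mathcal{L}}_F^4$, and combined with $(\Tr(L+L^*))^2=4d^2\norm{\mathcal{L}}_F^4$ this produces the lower bound $(\Tr(L+L^*))^2/\Tr[(L+L^*)^2]\geq\frac{2d^2}{d+3}$.

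The proof then closes by the elementary inequality $\lfloor 2d-2\sqrt{2d}+1\rfloor\leq\frac{2d^2}{d+3}$ for all $d\geq 1$: clearing the denominator and squaring this reduces to $8d^3-d^2+30d-9\geq 0$, which is immediate. The main technical obstacle is the careful bookkeeping in the computation of $\Tr[(L+L^*)^2]$ — in particular, isolating the vanishing of the cross term $\Tr[(A+A^\dagger)B]$ as the reason for insisting on $\Tr\mathcal{L}=0$ — while the spectral inequality and the final numerical check are routine applications of Cauchy--Schwarz and elementary arithmetic, respectively.
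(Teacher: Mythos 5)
Your proof is correct, but it takes a genuinely different route from the paper's. The paper bounds $-\sum_{K\leq k}\Lambda_K^\uparrow(L+L^*)$ by the Ky Fan norm $\norm{L+L^*}_{(k)}$, splits off the pieces $\overline{\L}\otimes\L$ and $\mathds{1}_d\otimes\L^\dagger\L+\overline{\L^\dagger\L}\otimes\mathds{1}_d$ by the triangle inequality, and estimates each Ky Fan norm separately (Cauchy--Schwarz on the singular values for the first, a direct diagonal computation for the second); the resulting threshold $(\sqrt{k}+1)^2\leq 2d$ is exactly what produces $f(d)=2d-2\sqrt{2d}+1$. You instead rewrite the claim as non-positivity of the sum of the $d^2-k$ largest eigenvalues and control it by the first two spectral moments of $L+L^*$, reducing everything to $k\,\Tr[(L+L^*)^2]\leq(\Tr[L+L^*])^2$. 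I checked your centred Cauchy--Schwarz lemma, your exact evaluation of $\Tr[(L+L^*)^2]$ (including the vanishing of the cross term under the normalisation $\Tr\L=0$), and the final arithmetic $8d^3-d^2+30d-9\geq 0$; all are right. What your approach buys is notable: your sufficient condition is $k\leq 2d^2/(d+3)=2d-6+18/(d+3)$, which for $d\geq 7$ is strictly larger than $\lfloor 2d-2\sqrt{2d}+1\rfloor$, so the same argument would let one replace $f(d)$ by $2d^2/(d+3)$ in Corollary \ref{CrlQuDetVSProductSmallestSV} and get within an additive constant of the optimal $2d-2$ from Example \ref{ExmOptimalityLinearOrder2} --- one unit short of what Conjecture \ref{CnjAMGMMatrices} would yield, but without needing the conjecture. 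The only minor points worth spelling out are the degenerate cases $\lfloor f(d)\rfloor=0$ (the claim then reduces to $\Tr[L+L^*]\leq 0$) and $\L$ proportional to $\mathds{1}_d$ (where $L+L^*=0$), both of which are trivial.
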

\begin{proof}
As in the proof of Lemma \ref{LmmEigIneqQuantumPowerCriterion}, we can w.l.o.g.~assume $\Tr[\L]=0$ and therefore $\Tr [L + L^\ast]=- 2d\norm{\L}_F^2$.
We can now bound
\begin{align*}
    -\sum\limits_{K=1}^{\lfloor f(d)\rfloor} \Lambda_K^{\uparrow}(L+L^*)
    &\leq \sum\limits_{K=1}^{\lfloor f(d)\rfloor} \lvert \Lambda_K^{\uparrow}(L+L^*)\rvert\\
    &\leq \sum\limits_{K=1}^{\lfloor f(d)\rfloor} s^\downarrow_K(L+L^\ast)\\
    &= \norm{L+L^\ast}_{\left(\lfloor f(d)\rfloor\right)}\\
    &= \norm{\overline{\L} \otimes \L + \overline{\L^\dagger} \otimes \L^\dagger - \mathds{1}_d \otimes \L^\dagger \L - \overline{\L^\dagger \L } \otimes \mathds{1}_d}_{\left(\lfloor f(d)\rfloor\right)}\\
    &\leq 2\norm{\overline{\L} \otimes \L}_{\left(\lfloor f(d)\rfloor\right)} + \norm{\mathds{1}_d \otimes \L^\dagger \L + \overline{\L^\dagger \L } \otimes \mathds{1}_d}_{\left(\lfloor f(d)\rfloor\right)},
\end{align*}
where we used the $k^{\textrm{th}}$ Ky Fan norm of a matrix as 
\begin{align*}
    \norm{A}_{(k)} : = 
    \sum_{i=1}^k s_i^\downarrow (A).
\end{align*}
We bound those two norms separately: For the first term,
\begin{align*}
    \norm{\overline{\L} \otimes \L}_{\left(\lfloor f(d)\rfloor\right)}
    &= \sum\limits_{K=1}^{\lfloor f(d)\rfloor} s^\downarrow_K (\overline{\L} \otimes \L)\\
    &\leq\sqrt{\lfloor f(d)\rfloor} \Big(\sum\limits_{K=1}^{\lfloor f(d)\rfloor} \big(s^\downarrow_K (\overline{\L} \otimes \L)\big)^2\Big)^\frac{1}{2}\\
    &\leq \sqrt{\lfloor f(d)\rfloor} \norm{\overline{\L} \otimes \L}_F\\
    &= \sqrt{\lfloor f(d)\rfloor} \norm{\L}_F^2,
\end{align*}
where the first inequality is an application of Cauchy-Schwarz.\\
For the second term, we choose an ONB w.r.t.~which $\L^\dagger\L$ is diagonal with the squares of the singular values $s_i$ of $\L$ on the diagonal (which is possible by unitary invariance of the Ky Fan norms) and then compute
\begin{align*}
    \norm{\mathds{1}_d \otimes \L^\dagger \L + \overline{\L^\dagger \L } \otimes \mathds{1}_d}_{\left(\lfloor f(d)\rfloor\right)}
    &= \norm{\textrm{diag}[2s_1^2, s_1^2 + s_2^2,\ldots,s_1^2 + s_d^2,s_1^2+s_2^2,\ldots,2s_d^2 ]    }_{\left(\lfloor f(d)\rfloor\right)}\\
    &\leq (\lfloor f(d)\rfloor+1)\sum\limits_{i=1}^d s_i^2\\
    &\leq  (\lfloor f(d)\rfloor+1)\norm{\L}_F^2.
\end{align*}
Plugging this into the above, we obtain 
\begin{align*}
    \Tr[L+L^\ast] - \sum\limits_{K=1}^{\lfloor f(d)\rfloor} \Lambda_K^{\uparrow}(L+L^*)
    &\leq  - 2d\norm{\L}_F^2 + (1 + 2\sqrt{\lfloor f(d)\rfloor} + \lfloor f(d)\rfloor)\norm{\L}_F^2.
\end{align*}
This is $\leq 0$ if $1 + 2\sqrt{f(d)} + f(d)-2d\leq 0$, which is guaranteed by the choice $f(d) = 2d - 2\sqrt{2d} + 1.$
\end{proof}

\begin{rmk}\label{RmkNormalLindbladians}
The reasoning in the proof of Lemma \ref{LmmEigIneqQuantumProductCriterion} becomes particularly simple if the Lindbladian $\L$ is normal. In that case, let $\lbrace v_j\rbrace_j$ be an orthonormal basis for $\IR^d$ consisting of eigenvectors of $\mathcal{L}$ corresponding to eigenvalues $\lbrace \lambda_j\rbrace_j$. By normality, the $\lbrace v_j\rbrace_j$ are also eigenvectors of $\L^\dagger$ to eigenvalues $\lbrace \overline{\lambda}_j\rbrace_j$. Recalling that in the matrix representation we can write $L + L^\ast = \overline{\L} \otimes \L + \overline{\L^\dagger} \otimes \L^\dagger - \mathds{1}_d \otimes \L^\dagger \L - \overline{\L^\dagger \L } \otimes \mathds{1}_d$, it is now easy to see that $\lbrace \bar{v}_i\otimes v_j\rbrace_{i,j}$ is an orthonormal basis of $\IC^{d^2}$ consisting of eigenvectors of $L+L^\ast$ to eigenvalues $\{ -|\lambda_i - \lambda_j|^2\}_{i,j}$. So all eigenvalues of $L+L^\ast$ are $\leq 0$, the inequality of Lemma  \ref{LmmEigIneqQuantumProductCriterion} is trivially satisfied.
\end{rmk}

We can now apply our reasoning from Subsection \ref{SctCriteriaGeneral} (with $k=\lfloor 2d-2\sqrt{2d}+1 \rfloor$ and $p=1$) to obtain

\begin{crl}\label{CrlQuDetVSProductSmallestSV}
Let $T\in\overline{\mathcal{I}}_d$. Then with $f(d)= 2d-2\sqrt{2d}+1$ we have $$0\leq \det (T)\leq\prod\limits_{i=1}^{\lfloor f(d) \rfloor} s_i^{\uparrow}(T).$$
\end{crl}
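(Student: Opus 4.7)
My plan is to follow the same template as Corollary \ref{CrlQuDetVSPowerSmallestSV}, but with the parameters $(k,p) = (\lfloor f(d) \rfloor, 1)$ in place of $(1, d/2)$. The non-negativity $\det(T) \geq 0$ will follow verbatim from the argument of Proposition \ref{PrpNonnegDet}: each finite product $\prod_j e^{G_j}$ has determinant $\exp(\sum_j \Tr G_j) > 0$, and this property is preserved under the closure in Definition \ref{DffInfinitesimalDivisibilityGeneral}.

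For the upper bound, I will invoke Corollary \ref{CrlNecessaryCondInfinitesimalMarkovianDivisibleGeneral} with $\mathcal{G}$ being the set of Lindblad generators (admissible thanks to Remark \ref{RmkCompactnessWLOG}). By the GKLS form in Theorem \ref{ThmGKLS}, it is enough to verify the spectral hypothesis on generators with a single summand, namely either a purely Hamiltonian generator $L(\rho) = i[\rho, H]$ or a single-Lindbladian dissipative generator $L(\rho) = \mathcal{L}\rho\mathcal{L}^\dagger - \tfrac{1}{2}\{\mathcal{L}^\dagger\mathcal{L}, \rho\}$, together with their scalar multiples $\lambda L$ for $\lambda \in [0,1]$. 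For the Hamiltonian case, the superoperator $L$ is anti-self-adjoint, so $L + L^* = 0$, both sides of the required inequality vanish, and $e^L$ has all singular values equal to $1$; the bound \eqref{eq:GoalInequality} is then trivial. For the dissipative case, Lemma \ref{LmmEigIneqQuantumProductCriterion} gives precisely
\[
\Tr[L + L^*] - \sum_{K=1}^{\lfloor f(d) \rfloor} \Lambda_K^\uparrow(L + L^*) \leq 0,
\]
which is the hypothesis of Lemma \ref{LmmSufficientEigAssumptionGeneral} at $p = 1$ and $k = \lfloor f(d) \rfloor$. Rescaling $\mathcal{L} \mapsto \sqrt{\lambda}\,\mathcal{L}$ (or $H \mapsto \lambda H$) leaves the form of these generators intact, so the estimate holds on the whole set $\tilde{\mathcal{G}}$ of truncated rays through extreme generators appearing in Corollary \ref{CrlNecessaryCondInfinitesimalMarkovianDivisibleGeneral}.

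With the spectral hypothesis verified, the claim $\det(T) \leq \prod_{i=1}^{\lfloor f(d)\rfloor} s_i^\uparrow(T)$ is a direct application of Corollary \ref{CrlNecessaryCondInfinitesimalMarkovianDivisibleGeneral}. I do not anticipate any serious obstacle here: the substantive analytic work, turning the combinatorics of $\overline{\mathcal{L}} \otimes \mathcal{L} + \overline{\mathcal{L}^\dagger} \otimes \mathcal{L}^\dagger - \mathds{1}_d \otimes \mathcal{L}^\dagger\mathcal{L} - \overline{\mathcal{L}^\dagger\mathcal{L}} \otimes \mathds{1}_d$ into a Ky-Fan-norm estimate that balances against $\Tr[L + L^*] = -2d \|\mathcal{L}\|_F^2$, has already been done in Lemma \ref{LmmEigIneqQuantumProductCriterion}. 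The present corollary is simply the packaging step that combines that eigenvalue estimate with the Trotterization-plus-singular-value-majorisation machinery from Subsection \ref{SctCriteriaGeneral}.
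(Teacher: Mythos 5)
Your proposal is correct and follows exactly the route the paper intends: the paper itself states Corollary \ref{CrlQuDetVSProductSmallestSV} as an immediate application of the Subsection \ref{SctCriteriaGeneral} machinery (Corollary \ref{CrlNecessaryCondInfinitesimalMarkovianDivisibleGeneral} with $k=\lfloor f(d)\rfloor$, $p=1$) combined with Lemma \ref{LmmEigIneqQuantumProductCriterion}, mirroring the proof of Corollary \ref{CrlQuDetVSPowerSmallestSV}. Your explicit checks of the Hamiltonian case and of the stability of the spectral hypothesis under the rescaling $\mathcal{L}\mapsto\sqrt{\lambda}\,\mathcal{L}$ are exactly the details the paper leaves implicit.
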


\begin{exm}\label{ExmOptimalityLinearOrder2}
Consider again the Lindblad generator $L$ from Example \ref{ExmOptimalityLinearOrder} and the corresponding channel $T$. With the eigenvalues and singular values computed in Example \ref{ExmOptimalityLinearOrder}, we see that in this case, $\sum\limits_{i=1}^{d^2-k} \Lambda_i^\downarrow (L+L^\ast)>0$ for all $k\geq 2d - 1$ and we have 
\begin{align*}
    \det (T) \leq \prod\limits_{i=1}^{2d-2} s^\uparrow_i (T)
\end{align*}
but
\begin{align*}
    \det (T) > \prod\limits_{i=1}^{k} s^\uparrow_i (T)
\end{align*}
for every $d^2> k > 2d-2$. This shows that in Corollary \ref{CrlQuDetVSProductSmallestSV} nothing better than $\det (T) \leq \prod\limits_{i=1}^{k} s^\uparrow_i (T)$ with $k=2d-2$ can be achieved.
\end{exm}

\begin{rmk}
After establishing the optimality of picking the smallest $2d-C$ singular values in Corollary \ref{CrlQuDetVSProductSmallestSV}, the question naturally arises whether this bound can in principle be achieved with our proof strategy. In other words, what is the optimal choice for $k$ s.t.~$$\norm{\overline{\L} \otimes \L + \overline{\L^\dagger} \otimes \L^\dagger - \mathds{1}_d \otimes \L^\dagger \L - \overline{\L^\dagger \L } \otimes \mathds{1}_d}_{(k)}\leq 2d \norm{\L}_F^2 ?$$
We clearly have
\begin{align*}
    \norm{\overline{\L} \otimes \L + \overline{\L^\dagger} \otimes \L^\dagger - \mathds{1}_d \otimes \L^\dagger \L - \overline{\L^\dagger \L } \otimes \mathds{1}_d}_{(k)}
    &\leq 2\norm{\overline{\L} \otimes \L}_{(k)} + \norm{\mathds{1}_d \otimes \L^\dagger \L + \overline{\L^\dagger \L } \otimes \mathds{1}_d}_{(k)}.
\end{align*}
The first term has the singular values $s_i(\L)s_j(\L)$ and the second one has singular values $s_i^2(\L)+s_j^2(\L)$. Thus, if we normalize the Frobenius-norm of $\L$ to 1 and write $p_i = s_i^2(\L),$ we can reduce the desired bound to the following 
\begin{cnj}\label{CnjAMGMMatrices}
Let $p\in\IR^d_{\geq 0}$ with $\sum\limits_{i=1}^d p_i = 1$. Define the matrices $a,g\in\IR^{d\times d}$ via
\begin{align*}
    a_{ij} = \frac{p_i + p_j}{2},\quad g_{ij}=\sqrt{p_i p_j}.
\end{align*}
Denote by $a_k^\downarrow$ and $g_k^\downarrow$ the $k^{th}$ largest entry of $a$ and $g$, respectively. Define 
\begin{align*}
    A = \sum\limits_{k=1}^{h(d)} a_k^\downarrow,\quad G = \sum\limits_{k=1}^{h(d)} g_k^\downarrow.
\end{align*}
We conjecture that the maximal integer $h(d)$ such that $A+G\leq d$ holds for any probability vector $p$ is given by $h(d) = 2d-5$.
\end{cnj}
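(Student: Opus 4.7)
The plan is to split the conjectured equality into a \emph{universality} direction, that $A_h + G_h \leq d$ holds for every probability vector $p$ when $h = 2d-5$, and a \emph{tightness} direction, that some $p$ violates the bound at $h = 2d-4$. For tightness I would analyse the one-parameter family $p^{(x)} = (x, 1-x, 0, \ldots, 0)$ for $x \in [1/2, 1]$. Once $x$ is close enough to one that $x/2 > 1-x$, the top $2d-4$ entries of $a$ are forced to consist of $x$, two entries equal to $1/2$, and the remaining ones equal to $x/2$, while the top $2d-4$ entries of $g$ are simply $x$, two copies of $\sqrt{x(1-x)}$, and $1-x$, giving $G_{2d-4}(x) = 1 + 2\sqrt{x(1-x)}$. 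Then $A_{2d-4}(x) + G_{2d-4}(x)$ is of the shape $\alpha x + \beta + 2\sqrt{x(1-x)}$ for explicit $\alpha,\beta$; differentiating yields a quadratic, whose relevant root gives the maximiser, at which one checks the value exceeds $d$.

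For the universality direction, the central step is a reduction to probability vectors of small support. Both $A_k(p)$ and $G_k(p)$ are continuous, piecewise smooth (with pieces indexed by the orderings of the entries of $a$ and $g$), and invariant under permutations of the coordinates of $p$; by compactness of the simplex the maximum is attained. On each smooth region I would use a Lagrange multiplier analysis with the constraint $\sum_i p_i = 1$ to show that at a maximiser, the partial derivatives force all but a bounded number of coordinates to vanish, or to cluster into a few equal groups. The concrete target statement is that the maximum of $A_{2d-5}(p) + G_{2d-5}(p)$ over the simplex is attained by some $p$ with at most three (and conjecturally two) distinct non-zero values.

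Once the reduction to small support is in place, I would verify $A_{2d-5} + G_{2d-5} \leq d$ by direct computation. For support size one, $p = (1, 0, \ldots, 0)$ gives $A_{2d-5} = 1 + (2d-6)/2 = d-2$ and $G_{2d-5} = 1$, so $A+G = d-1 < d$. For support size two, $p = (x, 1-x, 0, \ldots, 0)$ requires a case split on how $x/2$, $1-x$, and $(1-x)/2$ interleave among the entries of $a$; on each case the inequality reduces to a one-variable problem in $x$ on $[1/2,1]$ and can be verified explicitly. Support size three would be treated analogously, with a bounded number of additional cases.

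The main obstacle is the small-support reduction. The orderings of the entries of $a$ and $g$ change discretely as $p$ varies, so there is no single smooth objective; a clean argument requires stitching together first-order analyses across combinatorial regions and checking that transitions between regions do not create new local maxima. A further subtlety is that the pointwise bound $g_{ij} \leq a_{ij}$ from AM-GM is far too loose to give anything near $A_k + G_k \leq d$; the proof must instead exploit the structure $g = \sqrt{p}\sqrt{p}^T$ (rank one) and $a = \tfrac{1}{2}(p\mathbf{1}^T + \mathbf{1}p^T)$ (rank at most two), so that the relevant sums of largest entries inherit low-rank constraints rather than being controlled entry-by-entry.
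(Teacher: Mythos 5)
First, a point of order: the paper does not prove this statement. It is stated explicitly as a conjecture, supported only by numerical tests and a heuristic about the separately known optima for the arithmetic and geometric parts; there is no proof in the paper to compare yours against. Your proposal is also not a proof but a plan, and its central step for the universality direction --- the reduction of the maximisation of $A_{2d-5}+G_{2d-5}$ over the simplex to vectors of support at most three --- is exactly the part you acknowledge you cannot yet carry out. As it stands, the conjecture remains open on both your account and the paper's.

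Beyond that, your tightness direction contains a concrete error: the two-point family $p^{(x)}=(x,1-x,0,\ldots,0)$ does \emph{not} violate the bound at $h=2d-4$. Carrying out your own computation for $x\in(2/3,1)$ gives $A_{2d-4}+G_{2d-4}=2+\tfrac{2d-5}{2}x+2\sqrt{x(1-x)}$, whose maximum over $x$ equals $\tfrac{1}{4}\bigl(2d+3+\sqrt{16+(2d-5)^2}\bigr)$; this is $\leq d$ for all $d\geq 4$, with equality only at $d=4$ (the comparison reduces to $16+(2d-5)^2\le(2d-3)^2$, i.e.\ $d\ge 4$). So no member of this family exceeds $d$ at $h=2d-4$, and it cannot witness that $2d-4$ fails. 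Witnesses do exist but need larger support: for $d=5$, $k=2d-4=6$, the vector $p=(0.8,0.1,0.1,0,0)$ gives $A_6=0.8+4\cdot 0.45+0.4=3.0$ and $G_6=0.8+4\sqrt{0.08}+0.1\approx 2.03$, so $A_6+G_6\approx 5.03>5$. This also shows that support-$2$ vectors are not extremal for this problem, which weakens the plausibility of the "at most two distinct non-zero values" endpoint of your proposed reduction; any successful argument must handle at least support-$3$ configurations of the form $(x,y,y,0,\ldots,0)$, and you would need to re-derive the conjectured value $2d-5$ (rather than something smaller) against that larger family.
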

We have tested this conjecture numerically for a wide range of dimensions. Theoretically it stems from the fact that we know the optimal values and corresponding probability vectors for the arithmetic ($h(d) = 2d-2$) and geometric mean ($h(d) = d^2$), respectively. So $A$ is by far more decisive and $G$ can only worsen the maximal number of summands by a bit. 
If we were able to prove this conjecture, we could choose $f(d) = h(d) =2d-5$ in Corollary \ref{CrlQuDetVSProductSmallestSV}, which would bring us closer to the optimum of $2d-2$ up to an additive constant.
\end{rmk}

\begin{rmk}
In contrast to the previous subsection, here we cannot provide an example of a quantum channel that violates the criterion from Corollary \ref{CrlQuDetVSProductSmallestSV}. As any channel having only singular values $\leq 1$ trivially satisfies the criterion, no unital channel will provide a violation, which makes analytically constructing an example more difficult.\\
We have also tried to find an example of a non infinitesimal divisible channel that is recognized as such by the conjectured optimal version of our criterion (which we cannot prove yet) numerically via minimizing the fraction $\prod\limits_{i=1}^{2d-2} s_i^\uparrow (T) / \det (T)$ over channels. This has, however, not been successful. We would be interested in any comments as to how such an example can be found or why finding one is a challenging task.
\end{rmk}

So far in our treatment of infinitesimal divisible quantum channels, we considered two extreme cases,  namely, estimating the determinant by the highest possible power of the smallest singular value and by the product of the largest possible number of the lowest singular values all with exponent $1$. 
The next Proposition corresponds to an interpolation between those two results.
\begin{prp} \label{PropInterpolation}
Let $T\in\overline{\mathcal{I}}_d$. Then for any $1 \leq k \leq d^2$ with $g(d)=\frac{2d}{k+2\sqrt{k}+1}$ we have 
\[
0\leq \det (T)\leq
\left( \prod\limits_{i=1}^{k} s_i^{\uparrow}(T)\right)^{g(d)}.
\]
\end{prp}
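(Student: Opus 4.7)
The plan is to repeat the argument of Lemma \ref{LmmEigIneqQuantumProductCriterion} with $k$ kept as a free parameter and the largest compatible exponent $p = g(d)$ extracted from the resulting estimate. By Theorem \ref{ThmGKLS} combined with Corollary \ref{CrlNecessaryCondInfinitesimalMarkovianDivisibleGeneral}, and exactly as in the proof of Corollary \ref{CrlQuDetVSPowerSmallestSV}, it suffices to verify the eigenvalue inequality
$$\Tr[L+L^*] - g(d) \sum_{i=1}^{k} \Lambda_i^\uparrow(L+L^*) \leq 0$$
for Hamiltonian generators $L = i[\cdot, H]$ and for purely dissipative single-Lindbladian generators $L(\rho) = \mathcal{L}\rho\mathcal{L}^\dagger - \tfrac{1}{2}\{\mathcal{L}^\dagger \mathcal{L}, \rho\}$. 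Hamiltonian parts produce an exponential with all singular values equal to $1$, so the target inequality is vacuous for them, and only the dissipative case requires work.

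As in Lemma \ref{LmmEigIneqQuantumProductCriterion}, I would first reduce to $\Tr[\mathcal{L}] = 0$ without loss of generality (by the shift-by-scalar freedom), which gives $\Tr[L+L^*] = -2d\norm{\mathcal{L}}_F^2$. The same Ky Fan norm estimate then yields
$$-\sum_{K=1}^{k} \Lambda_K^\uparrow(L+L^*) \leq \norm{L+L^*}_{(k)} \leq 2\norm{\overline{\mathcal{L}} \otimes \mathcal{L}}_{(k)} + \norm{\mathds{1}_d \otimes \mathcal{L}^\dagger \mathcal{L} + \overline{\mathcal{L}^\dagger \mathcal{L}} \otimes \mathds{1}_d}_{(k)}.$$
The Cauchy-Schwarz step from Lemma \ref{LmmEigIneqQuantumProductCriterion} still gives $\norm{\overline{\mathcal{L}} \otimes \mathcal{L}}_{(k)} \leq \sqrt{k}\,\norm{\mathcal{L}}_F^2$, and the same diagonalisation of $\mathcal{L}^\dagger\mathcal{L}$ bounds the second term by $(k+1)\norm{\mathcal{L}}_F^2$ (the entries $s_i^2+s_j^2$ are each at most $\norm{\mathcal{L}}_F^2$, and a quick count of how often each $s_i^2$ appears among the top $k$ diagonal entries delivers this bound for every $1 \leq k \leq d^2$). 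Combining these,
$$\Tr[L+L^*] - p\sum_{K=1}^{k} \Lambda_K^\uparrow(L+L^*) \leq \bigl(-2d + p(k + 2\sqrt{k}+1)\bigr)\norm{\mathcal{L}}_F^2,$$
which is nonpositive precisely when $p \leq g(d) = \tfrac{2d}{k+2\sqrt{k}+1}$.

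Choosing $p = g(d)$ and invoking Lemma \ref{LmmSufficientEigAssumptionGeneral} together with Corollary \ref{CrlNecessaryCondInfinitesimalMarkovianDivisibleGeneral} delivers the upper bound $\det(T) \leq \bigl(\prod_{i=1}^{k} s_i^\uparrow(T)\bigr)^{g(d)}$, and the nonnegativity $\det(T) \geq 0$ follows as in Proposition \ref{PrpNonnegDet}. No genuinely new obstacle arises beyond those already handled in Lemmas \ref{LmmEigIneqQuantumPowerCriterion} and \ref{LmmEigIneqQuantumProductCriterion}; the whole point is to decouple the two parameters that were frozen in the endpoint cases, which is why setting $k = 1$ recovers Corollary \ref{CrlQuDetVSPowerSmallestSV} (with $g(d) = d/2$) and demanding $g(d) = 1$ forces $k = 2d - 2\sqrt{2d} + 1$, recovering Corollary \ref{CrlQuDetVSProductSmallestSV}.
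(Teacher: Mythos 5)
Your proposal is correct and follows essentially the same route as the paper: reduce to the eigenvalue inequality for single dissipative generators via the general machinery of Subsection \ref{SctCriteriaGeneral}, then reuse the Ky Fan norm estimate from the proof of Lemma \ref{LmmEigIneqQuantumProductCriterion} with $k$ left as a free parameter, yielding $\norm{L+L^\ast}_{(k)} \leq (k+2\sqrt{k}+1)\norm{\L}_F^2$ and hence the exponent $g(d)$. The paper's proof simply cites that bound rather than rederiving it, but the content is identical.
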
{}
\begin{proof}
As shown in Subsection \ref{SctCriteriaGeneral}, it suffices to show that any Lindblad generators $L$ satisfies
\[
\Tr[L+L^\ast] - g(d) \sum_{\ell=1}^{k}\Lambda_\ell^{\uparrow}(L+L^\ast)
\leq 
-2d \norm{\L}_{F}^2 + g(d) \norm{L+L^\ast}_{(k)}
\leq 0.
\]
Again, we only need to consider purely dissipative Lindblad generators with a single Lindbladian. For such generators, the desired assertion follows from the bound on the Ky Fan norm provided in the proof of Lemma \ref{LmmEigIneqQuantumProductCriterion}
\[
\norm{L+L^\ast}_{(k)} \leq \big( k + 2\sqrt{k}+1 \big) \norm{\L}_F^2.
\]
\end{proof}{}
\begin{rmk}
In our numerical tests, we observe the result of Corollary \ref{CrlQuDetVSPowerSmallestSV} to be the strongest in generic cases in higher dimensions, since generically the smallest singular value seems to be of some orders of magnitude smaller then the others. But the result in Proposition \ref{PropInterpolation} might give useful improvements for small dimension, especially if some of the lowest singular values are all of the same order of magnitude.
Take, e.g., the case $d=3,k=2$, then we get the three results
\[
0 \leq \det(T) 
\leq
\begin{cases}
s_1^{\uparrow}(T)^{\nicefrac{3}{2}} 
& \textrm{Corollary \ref{CrlQuDetVSPowerSmallestSV}},\\
s_1^{\uparrow}(T)s_2^\uparrow(T) 
& \textrm{Corollary \ref{CrlQuDetVSProductSmallestSV}},\\
\left( s_1^{\uparrow}(T)s_2^\uparrow(T)\right) ^{\frac{6}{3+2\sqrt{2}}} 
& \textrm{Proposition \ref{PropInterpolation}}.
\end{cases}{}
\]
So if $s_1^{\uparrow}(T)$ is a lot smaller than $s_2^{\uparrow}(T)$, the first result is the strongest. But if $s_1^{\uparrow}(T) \approx s_2^{\uparrow}(T)$, then the last result becomes the strongest criterion out of the three.
\end{rmk}{}

\subsection{Stochastic Matrices}\label{SctCriteriaStochastic}
The classical counterparts of quantum channels and Lindblad generators are stochastic matrices and transition rate matrices, respectively. In particular, when choosing the set of generators to be the set of all transition rate matrices, we obtain a notion of (infinitesimal) Markovian divisibility for stochastic matrices.\\
Motivated by the results of Subsections \ref{SctCriteriaGeneral} and \ref{SctCriteriaQuantum} we now study whether similar criteria for infinitesimal divisibility of stochastic matrices can be established. More precisely, we define

\begin{dff}\emph{(Markovian Divisible Stochastic Matrices)}\label{DffMarkovianDivisibleStochastic}\\
We define the set of $d\times d$ stochastic matrices to be 
\[\mathcal{S}_d := \{S\in\IR^{d\times d}~|~ S_{ij}\geq 0~\forall i,j\textrm{ and }\sum\limits_{j=1}^d S_{ij} = 1~\forall i \}\] 
and the set of $d\times d$ transition rate matrices to be 
\[\mathcal{Q}_d:=\{ Q\in\IR^{d\times d}~|~ Q_{ij}\geq 0~\forall i\neq j \textrm{ and }\sum\limits_{j=1}^d Q_{ij}=0~\forall i\}.\]
We call a stochastic matrix $S\in\mathcal{S}_d$ \emph{Markovian divisible} if it is Markovian divisible w.r.t.~the set of generators $\mathcal{Q}_d$ in the sense of Definition \ref{DffMarkovianDivisibleGeneral}.
\end{dff}

Note that, as discussed in Remark \ref{RmkInfinitesimalMarkovianversusMarkovianDivisible}, the ``infinitesimal'' requirement is automatically contained in this definition due to the structure of the set $\mathcal{Q}_d$, which is why we do not write it out explicitly.\\

Our first observation is that, in contrast to the case of Lindblad generators studied in Subsection \ref{SctCriteriaQuantum}, when allowing all transition rate matrices as generators, no non-trivial necessary criteria of our desired form \eqref{eq:GoalInequality} can hold.
\begin{exm}\label{ExmCounterexampleStochastic}
Take the transition rate matrix 
\[Q=
\begin{pmatrix}
-1 & 0 & \ldots & 0 & 1\\
0 & 0 & \ldots & 0 & 0\\
\vdots & & \ddots & & \vdots\\
0 & 0 & \ldots & 0 & 0
\end{pmatrix}\in\mathcal{Q}_d,
\quad \text{then } e^Q = 
\begin{pmatrix}
\frac{1}{e} & 0 & \ldots & 0 & 1-\frac{1}{e}\\
0 & 1 & \ldots & 0 & 0\\
\vdots & & \ddots & & \vdots\\
0 & 0 & \ldots & 0 & 1
\end{pmatrix},\] which has singular values $\frac{\sqrt{1-e+e^2+(e-1)\sqrt{1+e^2}}}{e}\approx 1.200$ of multiplicity $1$, $1$ of multiplicity $d-2$ and $\frac{\sqrt{1-e+e^2-(e-1)\sqrt{1+e^2}}}{e}\approx 0.306$ of multiplicity $1$. In particular, we see that for every $1\leq k<d$
\begin{align*}
    \det(e^Q) > \prod\limits_{i=1}^k s_i^\uparrow (e^Q).
\end{align*}
So for Markovian divisible stochastic matrices, there cannot be a non-trivial necessary criterion of the form of Corollary \ref{CrlQuDetVSProductSmallestSV}. Similarly, no non-trivial necessary criterion as in Corollary \ref{CrlQuDetVSPowerSmallestSV} with an exponent growing with some positive power of $d$ can hold when we take the set $\mathcal{G}$ of generators to be all transition rate matrices.
\end{exm}

This example, together with Corollaries \ref{CrlQuDetVSProductSmallestSV} and \ref{CrlQuDetVSPowerSmallestSV}, implies the following 
\begin{crl}
There cannot be a mapping from $d^2 \times d^2$ stochastic matrices to $\mathcal{T}_d$ that both preserves infinitesimal Markovian divisibility and leaves singular values invariant.
\end{crl}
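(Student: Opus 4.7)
My plan is a direct contradiction argument combining Example \ref{ExmCounterexampleStochastic} with the quantum determinant bound from Corollary \ref{CrlQuDetVSProductSmallestSV} (or equivalently \ref{CrlQuDetVSPowerSmallestSV}). Suppose such a mapping $\Phi \colon \mathcal{S}_{d^2} \to \mathcal{T}_d$ existed; I will exhibit a single stochastic matrix whose image has to violate the quantum upper bound.

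To produce the witness, I apply Example \ref{ExmCounterexampleStochastic} in ambient dimension $d^2$: take $Q \in \mathcal{Q}_{d^2}$ to be the transition rate matrix with a single off-diagonal entry equal to $1$, and set $S := e^Q \in \mathcal{S}_{d^2}$. Then $S$ is Markovian divisible (it lies in the one-parameter semigroup $\{e^{tQ}\}_{t\geq 0}$), its singular values are $\sigma_- \approx 0.306$, $1$ with multiplicity $d^2-2$, and $\sigma_+ \approx 1.200$, and $\det(S) = e^{\Tr(Q)} = e^{-1}$. The feature recorded in the example is
\[
    \det(S) \;=\; e^{-1} \;>\; \sigma_- \;=\; \prod_{i=1}^{k} s_i^{\uparrow}(S) \qquad \text{for every } 1 \leq k \leq d^2 - 1.
\]

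Under the hypothesis, $\Phi(S) \in \overline{\mathcal{I}}_d$ shares its singular values with $S$, so $|\det(\Phi(S))| = \prod_{i=1}^{d^2} s_i^{\uparrow}(S) = e^{-1}$; combined with the non-negativity of the determinant from Corollary \ref{CrlNecessaryCondInfinitesimalMarkovianDivisibleGeneral}, this strengthens to $\det(\Phi(S)) = e^{-1}$. Applying Corollary \ref{CrlQuDetVSProductSmallestSV} to $\Phi(S)$ then yields
\[
    e^{-1} \;=\; \det(\Phi(S)) \;\leq\; \prod_{i=1}^{\lfloor 2d - 2\sqrt{2d}+1 \rfloor} s_i^{\uparrow}(\Phi(S)) \;=\; \sigma_-\;\approx\; 0.306,
\]
which is absurd provided $1 \leq \lfloor 2d - 2\sqrt{2d}+1 \rfloor \leq d^2 - 1$. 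A direct check shows this holds for every $d \geq 2$, and the case $d = 1$ is vacuous.

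I do not anticipate a technical obstacle: the only facts to verify are the elementary inequality $\lfloor 2d - 2\sqrt{2d}+1 \rfloor \leq d^2 - 1$ for $d \geq 2$, and the dimension matching (both $S$ and the matrix representation of $\Phi(S)$ are of size $d^2 \times d^2$, as required for singular values to be comparable). One could alternatively close the argument with Corollary \ref{CrlQuDetVSPowerSmallestSV}: since $s_1^{\uparrow}(\Phi(S)) = \sigma_- \approx 0.306 < e^{-1}$, the bound $\det(\Phi(S)) \leq \bigl(s_1^{\uparrow}(\Phi(S))\bigr)^{d/2}$ is violated for every $d \geq 2$.
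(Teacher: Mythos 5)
Your proposal is correct and is exactly the argument the paper intends: it instantiates Example \ref{ExmCounterexampleStochastic} in dimension $d^2$, notes that the resulting Markovian divisible stochastic matrix has $\lvert\det\rvert = e^{-1}$ while every product of at most $d^2-1$ smallest singular values equals $\sigma_-\approx 0.306$, and derives a contradiction with Corollary \ref{CrlQuDetVSProductSmallestSV} (or \ref{CrlQuDetVSPowerSmallestSV}) applied to the image channel. The paper leaves this verification implicit, so your write-up simply fills in the same steps, including the harmless dimension and floor checks.
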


We can, however, restrict our attention to strict subsets of all transition rate matrices and derive analogous criteria there.
\begin{lmm}
Let $c\in (0,1]$. Consider the set of generators $$\mathcal{G}_c:= \{Q\in\IR^{d\times d}~|~ Q\textrm{ is a transition rate matrix with } \ins{\sum_{\ell=1}^d Q_{\ell k}=0 \textrm{ and }} Q_{kk}\leq c\min\limits_{1\leq l\leq d}Q_{ll}~\forall 1\leq k\leq d\}.$$ Then $\Tr[Q+Q^T] - \frac{1+c(d-1)}{2}\lambda^\uparrow_1 (Q+Q^T)\leq 0$.
\end{lmm}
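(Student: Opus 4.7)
The plan is to follow the two-step structure of Lemma~\ref{LmmEigIneqQuantumPowerCriterion}: set $A := Q+Q^T$ and $m := \min_l Q_{ll}$ (achieved at some index $l^*$), then bound $|\Tr[A]|$ from below and $|\lambda_1^\uparrow(A)|$ from above, both in terms of $|m|$, so that their ratio yields the coefficient $p := (1+c(d-1))/2$.

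The trace bound is the easy step. The $\mathcal{G}_c$ hypothesis forces $Q_{kk}\leq cm$ for every $k$, while $Q_{l^*l^*}=m$ by definition of $l^*$. Summing over $l$ yields $\sum_l Q_{ll}\leq m+(d-1)cm=(1+c(d-1))m$, whence $\Tr[A]=2\sum_l Q_{ll}\leq 2(1+c(d-1))m$.

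The main technical obstacle is the companion estimate $\lambda_1^\uparrow(A)\geq 4m$. The natural first tool is Gershgorin applied to the symmetric matrix $A$: its diagonal entries $A_{ii}=2Q_{ii}$ are non-positive and its off-diagonals $A_{ij}=Q_{ij}+Q_{ji}$ are non-negative by the transition-rate structure of $Q$, so the $i$-th Gershgorin radius evaluates to $R_i=-2Q_{ii}+C_i$, where $C_i:=\sum_l Q_{li}$ is the $i$-th column sum of $Q$; this gives $\lambda_1^\uparrow(A)\geq \min_i(4Q_{ii}-C_i)$. The delicate point is that $C_i$ is not sign-constrained a priori, so one has to simultaneously exploit the $\mathcal{G}_c$ bounds $c|m|\leq |Q_{ll}|\leq |m|$ and the row-sum identity $\sum_j Q_{ij}=0$ (which gives $\sum_{j\neq i}Q_{ij}=|Q_{ii}|$ for each $i$) in order to obtain $4Q_{ii}-C_i\geq 4m$ uniformly in $i$. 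I expect this to be where the bulk of the work lies. Once both estimates are in hand, multiplying $\lambda_1^\uparrow(A)\geq 4m$ by $p>0$ yields $p\lambda_1^\uparrow(A)\geq 4pm = 2(1+c(d-1))m\geq \Tr[A]$, which is precisely the claimed inequality.
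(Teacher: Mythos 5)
Your outline follows the paper's own route exactly: the same trace estimate $\Tr[Q+Q^T]\leq 2(1+c(d-1))m$ with $m=\min_l Q_{ll}$, and the same target bound $\lambda_1^\uparrow(Q+Q^T)\geq 4m$ via Gershgorin. The trace half is complete and correct. The problem is the other half --- the step you yourself flag as ``where the bulk of the work lies'' --- which is not carried out, and in fact cannot be: the estimate $4Q_{ii}-C_i\geq 4m$ you hope to extract is false, because the row-sum condition $\sum_j Q_{ij}=0$ gives no control whatsoever over the column sums $C_i$. Concretely, take
\[
Q=\begin{pmatrix}-1&1&0\\ 1&-1&0\\ 1&0&-1\end{pmatrix},
\]
which is a transition rate matrix with constant diagonal, hence lies in $\mathcal{G}_1\subseteq\mathcal{G}_c$ for every $c\in(0,1]$, with $m=-1$. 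Then
\[
Q+Q^T=\begin{pmatrix}-2&2&1\\ 2&-2&0\\ 1&0&-2\end{pmatrix}
\]
has eigenvalues $-2$ and $-2\pm\sqrt{5}$, so $\lambda_1^\uparrow(Q+Q^T)=-2-\sqrt{5}\approx-4.236<4m=-4$, while $\min_i(4Q_{ii}-C_i)=-5$ (attained at $i=1$, where $C_1=1$). So no amount of juggling the $\mathcal{G}_c$ constraints together with the row-sum identity will rescue the inequality: the exact smallest eigenvalue, not merely its Gershgorin estimate, already lies below $4m$.

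In fairness, the paper's proof asserts precisely the same bound $\lambda_1^\uparrow(Q+Q^T)\geq 4\min_l Q_{ll}$ ``using Gershgorin discs'' and the row-sum condition, without addressing the column-sum term; your analysis correctly isolates the point where that argument is incomplete. The Gershgorin route does go through if one additionally assumes the columns of $Q$ also sum to zero (doubly balanced generators), for then $R_i=-2Q_{ii}$ and $A_{ii}-R_i=4Q_{ii}\geq 4m$. But for $\mathcal{G}_c$ as defined, the matrix above violates not only the intermediate eigenvalue bound but the conclusion of the lemma itself: for $c=1$, $d=3$ one has $\Tr[Q+Q^T]=-6>-3-\tfrac{3}{2}\sqrt{5}=\tfrac{3}{2}\lambda_1^\uparrow(Q+Q^T)$. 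So the missing step in your proposal is not a technicality to be filled in later; the statement needs a stronger hypothesis before any proof along these lines can succeed.
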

\begin{proof}
Clearly, for $Q\in\mathcal{G}_c$ we have $\Tr[Q+Q^T]=2\sum\limits_{i=1}^d Q_{ii}\leq 2(1+c(d-1))\min\limits_{1\leq l\leq d}Q_{ll}$. As $\sum\limits_{j=1}^d Q_{ij}=0$ \ins{and $\sum\limits_{j=1}^d Q_{ji}=0$} for all $1\leq i\leq d$, we can use Gerschgorin discs to obtain $\lambda^\uparrow_1 (Q+Q^T)\geq 4\min\limits_{1\leq l\leq d}Q_{ll}$. In particular, we have that 
$$ \Tr[Q+Q^T] - \frac{1+c(d-1)}{2}\lambda^\uparrow_1 (Q+Q^T) \leq 2(1+c(d-1))\min\limits_{1\leq l\leq d}Q_{ll} - 2(1+c(d-1))\min\limits_{1\leq l\leq d}Q_{ll} = 0,$$ as claimed.
\end{proof}

According to our reasoning from Subsection \ref{SctCriteriaGeneral}, this directly implies the following
\begin{crl}\label{CrlStochDetVSPowerSmallestSV}
Let $c\in (0,1]$. Suppose that $S\in [0,1]^{d\times d}$ is a stochastic matrix that is Markovian divisible w.r.t.~$\mathcal{G}_c$. Then $\det (S)\leq \left(s_1^\uparrow (S)\right)^{\frac{1+c(d-1)}{2}}$.
\end{crl}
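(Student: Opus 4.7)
The plan is to recognize that this corollary is an immediate application of the general framework developed in Subsection \ref{SctCriteriaGeneral}, specialized to the set of generators $\mathcal{G}_c$. The preceding lemma has already done the only nontrivial work, namely verifying the spectral hypothesis for every element of $\mathcal{G}_c$: for each $Q\in\mathcal{G}_c$ we have
\[
\Tr[Q+Q^T] - \frac{1+c(d-1)}{2}\,\lambda_1^\uparrow(Q+Q^T)\leq 0.
\]
This is exactly the assumption of Theorem \ref{ThmNecessaryCondMarkovianDivisibleGeneral} with $k=1$ and $p=\frac{1+c(d-1)}{2}$, so I would invoke that theorem directly to obtain $\lvert\det(S)\rvert \leq \left(s_1^\uparrow(S)\right)^{\frac{1+c(d-1)}{2}}$ for every $S\in\overline{\mathcal{D}}_{\mathcal{G}_c}$.

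To conclude the stated inequality (without absolute value), I would additionally observe that the determinant is non-negative in this setting, exactly as in Proposition \ref{PrpNonnegDet}: for any real generator $Q\in\mathcal{Q}_d$ we have $\det(e^Q)=e^{\Tr(Q)}>0$, so any finite product $\prod_i e^{Q_i}$ with $Q_i\in\mathcal{G}_c\subset\mathcal{Q}_d$ has strictly positive determinant, and by continuity of the determinant the closure $\overline{\mathcal{D}}_{\mathcal{G}_c}$ consists only of matrices with non-negative determinant. Hence $\det(S)=\lvert\det(S)\rvert$ and the bound transfers.

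Since every step is a bookkeeping application of results already proven in Section \ref{SctCriteriaMarkovianDivisibility}, I expect no real obstacle: the proof amounts to one sentence citing the preceding lemma together with Theorem \ref{ThmNecessaryCondMarkovianDivisibleGeneral}, plus a brief remark on non-negativity of the determinant. The only subtlety worth flagging is that, although Definition \ref{DffMarkovianDivisibleStochastic} speaks of Markovian divisibility w.r.t.\ all of $\mathcal{Q}_d$, the statement here restricts to the subset $\mathcal{G}_c$; one should make clear that ``Markovian divisible w.r.t.\ $\mathcal{G}_c$'' means $S\in\overline{\mathcal{D}}_{\mathcal{G}_c}$ in the sense of Definition \ref{DffMarkovianDivisibleGeneral}, so that Theorem \ref{ThmNecessaryCondMarkovianDivisibleGeneral} applies verbatim.
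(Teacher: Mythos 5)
Your proposal is correct and matches the paper's approach exactly: the paper likewise treats this corollary as a direct consequence of the preceding lemma combined with the general machinery of Subsection \ref{SctCriteriaGeneral} (Theorem \ref{ThmNecessaryCondMarkovianDivisibleGeneral} with $k=1$ and $p=\tfrac{1+c(d-1)}{2}$), offering no further argument. Your added remark on non-negativity of the determinant is sound but not even needed, since $\det(S)\leq\lvert\det(S)\rvert$ already yields the stated inequality.
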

If we set $c=1$, then $\mathcal{G}_1$ describes the set of transition rate matrices with constant diagonal. For Markovian divisibility of a stochastic matrix $S$ w.r.t.~this restricted set of generators, we obtain again the criterion $\det (S)\leq \left(s_1^\uparrow (S)\right)^{\frac{d}{2}}$.

\section{Conclusion}
In this work, we described how the notion of infinitesimal Markovian divisibility introduced in \cite{Wolf.2008} for quantum channels with the generators in Lindblad form can be extended to a notion applicable to general linear maps and (closed and convex) set of generators.\\
Our main contribution towards an understanding of this notion is a general proof strategy, based on (sub-)multiplicativity properties of the determinant and of products of largest singular values as well as Trotterization, with which we can establish necessary criteria for infinitesimal Markovian divsibility from a spectral property of the generators.\\
We showed that all Lindblad generators satisfy such a property, therefore our approach yields necessary criteria for infinitesimal Markovian divisibility of quantum channels in any (finite) dimension. These are the first such criteria beyond dimension $2$ aside from non-negativity of the determinant. Using these criteria, we gave new examples of provably not infinitesimal Markovian divisible quantum channels that can be found in any neighborhood of any rank-deficient quantum channel.\\
However, when studying the classical counterpart - stochastic matrices as maps of interest and transition rate matrices as generators - we found that in the general scenario in which all possible transition rate matrices are allowed as generators, no necessary criterion of our desired form can hold. We could apply our proof strategy only after imposing an additional restriction on the allowed transition rate matrices, which can be interpreted as requiring that the time scales for remaining in any of the states of the Markov chain are comparable. (In particular, we have to assume that there are no absorbing states.)\\

Several follow-up questions arise naturally from our work. The first such question is for improvements of our results of Corollaries \ref{CrlQuDetVSPowerSmallestSV} and \ref{CrlQuDetVSProductSmallestSV}. In Examples \ref{ExmOptimalityLinearOrder} and \ref{ExmOptimalityLinearOrder2}, we have shown that our results are close to optimal w.r.t.~the dimension dependence of the exponent in Corollary \ref{CrlQuDetVSPowerSmallestSV} and optimal in leading order w.r.t.~the number of factors in Corollary \ref{CrlQuDetVSProductSmallestSV}. Nevertheless, there remains a gap to be closed. One possible step for improving Corollary \ref{CrlQuDetVSProductSmallestSV} might lie in a better understanding of Conjecture \ref{CnjAMGMMatrices}. One might also wonder whether there a subclass of Lindblad operators for which our proof strategy yields stronger bounds.\\
More generally, we are hoping for a better understanding of the result of Corollary \ref{CrlQuDetVSProductSmallestSV}. A crucial first step would be to find - either analytically or numerically - examples of not infinitesimal Markovian divisible quantum channels that violate the inequality in Corollary \ref{CrlQuDetVSProductSmallestSV} (or, for that matter, our conjectured improvement of it). As our proof of this inequality makes extensive use of the assumed divisibility structure, we would consider it surprising if no such examples could be found, which would make it trivial as a necessary criterion.\\
We mention one more natural question concerning the case of infinitesimal Markovian divisible quantum channels. Namely, now that we have established necessary criteria for this property, can these be complemented by sufficient criteria of a similar form? The results of \cite{Wolf.2008} show that for generic qubit channels, an inequality between the determinant of a channel and the square of its smallest singular value is indeed both a necessary and sufficient criterion for infinitesimal Markovian divisibility. But it is not at all clear whether this gerneralizes to higher dimensions.\\
Finally, here we have applied our general proof strategy to two scenarios, that of Lindblad generators and that of transition rate matrices as generators. It would be interesting to find other sets of matrix semigroups whose generators satisfy a spectral property as required in Theorem \ref{ThmNecessaryCondMarkovianDivisibleGeneral}.

\vfill
\section*{Acknowledgements}
Both M.C.C.~and B.R.G.~thank Michael M.~Wolf for suggesting this problem and for many insightful discussions. We also are grateful for the suggestions made by the anonymous reviewer at the Journal of Mathematical Physics.

M.C.C.~gratefully acknowledges support from the TopMath Graduate Center of the TUM Graduate School at the Technical University of Munich, Germany, and from the TopMath Program at the Elite Network of Bavaria. M.C.C.~is supported by a doctoral scholarship of the German Academic Scholarship Foundation (Studienstiftung des deutschen Volkes).

B.R.G.~gratefully acknowledges support from the International Research Training Group
IGDK Munich - Graz funded by the Deutsche Forschungsgemeinschaft (DFG, German Research Foundation) - Projektummer 188264188/GRK1754.
\newpage
\setcounter{secnumdepth}{0}
\defbibheading{head}{\section{References}}
\printbibliography[heading=head]

\setcounter{secnumdepth}{2}
\newpage
\appendix
\section*{Appendix}
\section{Proof of an Improvement to Corollary \ref{CrlQuDetVSPowerSmallestSV}}\label{SctAppendixImprovement}
As mentioned in Remark \ref{remark:improvement}, we are able to improve the exponent in Corollary \ref{CrlQuDetVSPowerSmallestSV} from $\frac{d}{2}$ to $\frac{2}{2+\sqrt{\frac{13}{8}}} ~d \approx 0.610733 ~d$.

The idea behind the improvement is to estimate more carefully the smallest (``most negative'') eigenvalue $ \Lambda_1^\uparrow (L+L^\ast)$. In the proof of Corollary \ref{CrlQuDetVSPowerSmallestSV}, we simply estimate $ \Lambda_1^\uparrow (L+L^\ast)$ from below by $-4\norm{\L}_F^2$, which yields the exponent $\frac{d}{2}$ when comparing it to the $-2d \norm{L}_F^2$ from the trace of $L+L^\ast$.
To obtain our improved version, we prove the following
\begin{lmm} \label{lem:improvement}
Let $\L\in\mathcal{M}_d$ and $L(\rho)=\mathcal{L} \rho\mathcal{L}^{\dagger}- \frac{1}{2} \lbrace \mathcal{L}^{\dagger} \mathcal{L},\rho\rbrace$. Then 
\[
\Lambda_1^\uparrow (L+L^\ast) \geq - \bigg( 2 + \sqrt{\tfrac{13}{8}} \bigg) \norm{\L}_{F}^2.
\]
\end{lmm}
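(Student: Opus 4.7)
The plan is to establish the stronger norm bound $\|L+L^\ast\|_\infty \leq \bigl(2+\sqrt{\tfrac{13}{8}}\bigr)\|\mathcal{L}\|_F^2$, from which the claim follows since $\Lambda_1^\uparrow(L+L^\ast)\geq -\|L+L^\ast\|_\infty$. As in the proof of Lemma \ref{LmmEigIneqQuantumPowerCriterion}, we may first assume without loss of generality that $\Tr[\mathcal{L}]=0$ (by shifting $\mathcal{L}\mapsto \mathcal{L}-\tfrac{\Tr[\mathcal{L}]}{d}\mathds{1}_d$, which leaves $L$ unchanged and can only decrease $\|\mathcal{L}\|_F^2$).

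Using the matrix representation from the proof of Lemma \ref{LmmEigIneqQuantumPowerCriterion}, split $\mathrm{vec}(L+L^\ast) = J - K$, where $J := \bar{\mathcal{L}}\otimes\mathcal{L} + \overline{\mathcal{L}^\dagger}\otimes\mathcal{L}^\dagger$ and $K := \mathds{1}_d\otimes\mathcal{L}^\dagger\mathcal{L} + \overline{\mathcal{L}^\dagger\mathcal{L}}\otimes\mathds{1}_d$. The operator $K$ is positive semidefinite with $\|K\|_\infty = 2\|\mathcal{L}\|_\infty^2 \leq 2\|\mathcal{L}\|_F^2$; this contributes the summand $2$ to the target constant. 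The original proof of Lemma \ref{LmmEigIneqQuantumPowerCriterion} then bounds $\|J\|_\infty\leq 2\|\bar{\mathcal{L}}\otimes\mathcal{L}\|_\infty = 2\|\mathcal{L}\|_\infty^2$ by the triangle inequality, which costs another $2$ and yields the bound $4\|\mathcal{L}\|_F^2$. The refinement lies in bounding $\|J\|_\infty$ more tightly.

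Since $J$ is Hermitian, one has $\|J\|_\infty = \sup_{\|V\|_F=1} 2|\mathrm{Re}\,\Tr[V^\dagger\mathcal{L}V\mathcal{L}^\dagger]|$. The plan is to bound the cross term via two complementary Cauchy--Schwarz inequalities in the Hilbert--Schmidt inner product,
\[
|\Tr[V^\dagger\mathcal{L}V\mathcal{L}^\dagger]| \leq \|\mathcal{L}^\dagger V\|_F\|V\mathcal{L}^\dagger\|_F \quad\text{and}\quad |\Tr[V^\dagger\mathcal{L}V\mathcal{L}^\dagger]| \leq \|\mathcal{L}V\|_F\|V\mathcal{L}\|_F,
\]
and to combine them with the budget identity
\[
\|\mathcal{L}V\|_F^2+\|V\mathcal{L}^\dagger\|_F^2+\|\mathcal{L}^\dagger V\|_F^2+\|V\mathcal{L}\|_F^2 = \Tr\bigl[(\mathcal{L}^\dagger\mathcal{L}+\mathcal{L}\mathcal{L}^\dagger)(V^\dagger V+VV^\dagger)\bigr] \leq 4\|\mathcal{L}\|_F^2,
\]
through a parametrized Young-type inequality. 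Optimizing the Young parameter is expected to produce the factor $\sqrt{13/8}$ in front of $\|\mathcal{L}\|_F^2$, yielding the improved bound $\|J\|_\infty \leq \sqrt{13/8}\|\mathcal{L}\|_F^2$ and hence the full estimate.

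The main obstacle will be executing this Young-type optimization to pin down the exact constant $\sqrt{13/8}$: the argument reduces to a constrained quadratic optimization in four nonnegative unknowns (the squared Frobenius norms above), whose active constraints determine the value $13/8$. Case analysis on which Cauchy--Schwarz bound is active and which of the Frobenius norms are saturated at the optimum will be needed to verify that no slack remains.
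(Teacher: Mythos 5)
There is a genuine gap here: splitting $L+L^\ast$ into $J=\overline{\L}\otimes\L+\overline{\L^\dagger}\otimes\L^\dagger$ and $K=\mathds{1}_d\otimes\L^\dagger\L+\overline{\L^\dagger\L}\otimes\mathds{1}_d$ and bounding the two operator norms \emph{separately} cannot reach the constant $2+\sqrt{13/8}\approx 3.27$. Take $\L=\mathrm{diag}(\lambda_1,\mu,\dots,\mu)$ traceless with $\norm{\L}_F=1$, so $\lambda_1^2=(d-1)/d$. Then $J(E_{11})=2\lambda_1^2E_{11}$, so $\norm{J}_\infty\geq 2\lambda_1^2\to 2$ as $d\to\infty$; in particular your target $\norm{J}_\infty\leq\sqrt{13/8}\,\norm{\L}_F^2\approx 1.27\,\norm{\L}_F^2$ already fails at $d=3$, where $2\lambda_1^2=4/3$. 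Since moreover $\norm{K}_\infty=2\norm{\L}_\infty^2=2\lambda_1^2\to 2$, the triangle inequality can certify at best $\norm{J}_\infty+\norm{K}_\infty\to 4\norm{\L}_F^2$ on this family --- even though there $L+L^\ast$ acts diagonally with entries $-(\lambda_i-\lambda_j)^2$, so that in truth $\Lambda_1^\uparrow(L+L^\ast)\geq -2\norm{\L}_F^2$. The entire content of the improvement is the cancellation between $J$ and $K$, which your first step discards. Independently of this, the proposed optimization is vacuous: writing $a=\norm{\L V}_F$, $b=\norm{V\L}_F$, $c=\norm{\L^\dagger V}_F$, $e=\norm{V\L^\dagger}_F$, the only constraint you impose is $a^2+b^2+c^2+e^2\leq 4\norm{\L}_F^2$, under which the point $a=b=c=e=\norm{\L}_F$ is feasible and gives $2\min(ab,ce)=2\norm{\L}_F^2$; so no Young-type combination of these two Cauchy--Schwarz bounds with that budget can push the factor for $\norm{J}_\infty$ below $2$, let alone to $\sqrt{13/8}$.

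The paper proves the lemma by a route that keeps $J$ and $K$ together: it writes $L+L^\ast$ entrywise in a product basis in which $\L^\dagger\L$ is diagonal, so that each diagonal entry has the form $2\Re\big(\overline{\L}_{(q+1)(q+1)}\L_{rr}\big)-\sigma_r^2-\sigma_{q+1}^2$ (the cancellation is visible at this level), bounds the $\ell^2$-norm of each off-diagonal row by Frobenius norms of $\L$, and then invokes the $\ell^2$-version of the Gerschgorin theorem; the constant $\sqrt{13/8}$ emerges from minimizing an explicit two-variable function over the unit disc. If you want to salvage a variational argument, you must lower-bound the full quadratic form $\langle V,(J-K)V\rangle=2\Re\Tr[V^\dagger\L V\L^\dagger]-\Tr[V^\dagger\{\L^\dagger\L,V\}]$ as a whole, not term by term.
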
{}
\begin{proof}

The starting point for our reasoning is the $l^2$-version of the Gerschgorin disc Theorem (see, e.g., \cite{Bhatia.1997}), which states that for a Hermitian matrix $A = \big( a_{ij}\big)_{i,j}$, each interval $[a_{ii}-r_{i}, a_{ii}+r_i]$ contains at least one eigenvalue of $A$, where
\[
r_i = \left(\sum_{j \neq i} |a_{ij}|^2\right)^{\nicefrac{1}{2}}.
\]
Next, note that due to the tensor-structure of $L+L^\ast$ we can write its entries in a matrix representation as  
\[
\big( L+L^\ast\big)_{kl}
=
\overline{\L}_{(q+1)(p+1)} \L_{rs} + \L_{(p+1)(q+1)} \overline{\L}_{sr} - \delta_{qp} (\L^\dagger \L)_{rs} - (\overline{\L^\dagger\L})_{(q+1)(p+1)} \delta_{rs}
,
\]
where $k = qd+r, l=pd+s$ with $q \in \{0,\ldots,d-1 \}, r \in \{1, \ldots,d \}$.
If we now choose an orthonormal basis such that $\L^\dagger\L= \mathrm{diag}[\sigma_1^2, \ldots,\sigma_d^2]$, we obtain for the diagonal entries
\[
\big( L+L^\ast\big)_{kk}
=
\overline{\L}_{(q+1)(q+1)} \L_{rr} + \L_{(q+1)(q+1)} \overline{\L}_{rr} - \sigma_r^2 -\sigma_{(q+1)}^2.
\]
For the off-diagonal entries we need to consider only the first two terms in $L+L^\ast$ due to the choice of our basis, i.e., we get for $k\neq l$
\[
\big( L+L^\ast\big)_{kl}
=
\overline{\L}_{(q+1)(p+1)} \L_{rs} + \L_{(p+1)(q+1)} \overline{\L}_{sr}.
\]

We need to distinguish two cases. \\
\underline{Case $k=1$:}
Here we have
\[
\big( L+L^\ast\big)_{11}
=
2 |\L_{11}|^2 - 2 \sigma_1^2
\]
and 
\begin{align*}
  \sum_{k \neq 1} \big|\big( L+L^\ast\big)_{1k}\big|^2
&=
\sum_{q,r}
\big| \overline{\L}_{1(q+1)} \L_{1r} + \L_{(q+1) 1} \overline{\L}_{r1} \big|^2 \\
&\leq 
\sum_{q} \big| \overline{\L}_{1(q+1)}\big|^2 \sum_{r} \big| \overline{\L}_{1r}\big|^2 
+
\sum_{q} \big| \overline{\L}_{(q+1)1}\big|^2 \sum_{r} \big| \overline{\L}_{r1}\big|^2 
+
2 \left(\sum_r \underbrace{\big| \L_{1r} \overline{\L_{r1}}\big| }_{\leq \frac{1}{2} \big( |\L_{1r}|^2  + |\L_{r1}|^2\big)} 
\right)^2\\
&\leq 
\norm{\L}_F^2 \big(\norm{\L}_F^2 + |\L_{11} |^2 \big) + \frac{1}{2} \big( \norm{\L}_F^2 + |\L_{11}|^2 \big)^2,
\end{align*}
where in the last step we used that, since we are summing up the first row and column, only the diagonal entry $|\L_{11}|^2$ appears twice and the sum of the remaining squares can be bounded by one Frobenius norm. 

Before we proceed let us note that w.l.o.g.~we can normalize $\norm{\L}_F^2 =1$ to make the following computations more readable. 
Then we obtain by completing the square,
\[
 \sum_{k \neq 1} \big|\big( L+L^\ast\big)_{1k}\big|^2
 \leq 
 1 + |\L_{11} |^2 + \tfrac{1}{2} \big(1 + |\L_{11}|^2 \big)^2
 =
 \big( \sqrt{\tfrac{3}{2} }  + \sqrt{\tfrac{2}{3}} |\L_{11}|^2 \big)^2 - \tfrac{1}{6} |\L_{11}|^4.
\]
Thus,
\[
\big( L+L^\ast\big)_{11}
-
\left( \sum_{k \neq 1} \big|\big( L+L^\ast\big)_{1k}\big|^2\right)^{\nicefrac{1}{2}}
\geq 
2 |\L_{11}|^2 - 2 \sigma_1^2 - \sqrt{\tfrac{3}{2} }  - \sqrt{\tfrac{2}{3}} |\L_{11}|^2 
\geq - \big( 2 + \sqrt{\tfrac{3}{2} }\big).
\]
So in this case we are even able to bound $a_{ii} - r_{i}$ from below by $-(2+\sqrt{\tfrac{3}{2} }) \norm{\L}_F^2$.

\underline{Case $k\neq1$:}
Here, we obtain for the diagonal entries using Young's inequality
\[
\big( L+L^\ast\big)_{kk}
=
\overline{\L}_{(q+1)(q+1)} \L_{rr} + \L_{(q+1)(q+1)} \overline{\L}_{rr} - \sigma_r^2 -\sigma_{(q+1)}^2
\geq 
- 2 \big| \overline{\L}_{(q+1)(q+1)} \L_{rr} \big| - \norm{\L}_F^2.
\]
Note that the two singular values might be the same but can nevertheless be bounded by just one Frobenius norm, which is the important difference to the case $k=1$. 

For the off-diagonal entries we start off in the same way as above
\begin{align*}
    \sum_{l \neq k}
    \big|\big( L+L^\ast\big)_{kl}\big|^2
    &\leq 
    \sum_{(p,s) \neq (q,r)} 
    \big| \overline{\L}_{(q+1)(p+1)} \L_{rs}\big|^2
    + \big| \L_{(p+1)(q+1)} \overline{\L}_{sr}\big|^2
    + 2 \big|  \overline{\L}_{(q+1)(p+1)} \L_{rs} \L_{(p+1)(q+1)} \overline{\L}_{sr}\big|\\
    &=
    \left(\sum_{p} |\L_{(q+1)(p+1)}|^2 \right)
    \left(\sum_{s} |\L_{rs}|^2 \right)
    + 
     \left(\sum_{p} |\L_{(p+1)(q+1)}|^2 \right)
    \left(\sum_{s} |\L_{sr}|^2 \right)\\
    &~~~ + 2 \left(\sum_{p} |\overline{\L}_{(q+1)(p+1)} \L_{(p+1)(q+1)}| \right)
    \left(\sum_{s} |\L_{rs} \overline{\L}_{sr}| \right)
    -4 |\overline{\L}_{(q+1)(q+1)} \L_{rr} |^2\\
    &\leq \norm{\L}_F^2 \left(\norm{L}_F^2 + \min\{|\L_{rr}|^2,|\L_{(q+1)(q+1)}|^2\} \right)-4 |\overline{\L}_{(q+1)(q+1)} \L_{rr} |^2
    \\
    &~~~
    + \frac{1}{2} \left( \norm{L}_F^2 + |\L_{rr}|^2 \right)\left( \norm{L}_F^2 + |\L_{(q+1)(q+1)}|^2 \right).
\end{align*}
Again normalizing $\norm{\L}_F^2 =1$ and denoting $x= \big|\L_{(q+1)(q+1)}\big|, y = \big|\L_{rr}\big| $ gives us
\begin{align*}
\big( L+L^\ast\big)_{kk} 
-
\left( 
\sum_{l \neq k}\big|\big( L+L^\ast\big)_{kl}\big|^2
\right)^{\nicefrac{1}{2}}
&\geq 
-2 xy - 1 
-
 \left( ( 1 + \min\{x^2,y^2\} )
    + \frac{1}{2} (1 + x^2 )(1 + y^2)-4 x^2y^2
    \right)^{\nicefrac{1}{2}}\\
    &=:g(x,y).
\end{align*}{}
Taking the minimum of the function on the right hand side over (the upper half of) the unit disk $x^2 + y^2 \leq 1$ gives us
\[
\big( L+L^\ast\big)_{kk} 
-
\left( 
\sum_{l \neq k}\big|\big( L+L^\ast\big)_{kl}\big|^2
\right)^{\nicefrac{1}{2}}
\geq 
\min_{B_1(0)}g(x,y)= g\big(\tfrac{1}{\sqrt{2}},\tfrac{1}{\sqrt{2}}\big)
=
- 2 - \sqrt{\tfrac{13}{8}}.
\]
As the second case $k\neq 1$ gives us the worse bound, our final estimate is precisely the statement from Lemma \ref{lem:improvement}.

\end{proof}

Again, this has to be compared to $-2d\norm{\mathcal{L}}_F^2$ in the reasoning of the proof of Corollary \ref{CrlQuDetVSPowerSmallestSV}, whereby we obtain the claimed exponent $\frac{2}{2+\sqrt{\frac{13}{8}}} ~d$ (instead of the previous $\frac{d}{2}$).

\end{document}